\documentclass{article}[11pt]

\usepackage{arxiv}

\usepackage[utf8]{inputenc} 
\usepackage[T1]{fontenc}    
\usepackage{hyperref}       
\usepackage{url}            
\usepackage{booktabs}       
\usepackage{amsfonts}       
\usepackage{nicefrac}       
\usepackage{microtype}      
\usepackage{lipsum}

\usepackage{amsmath}
\usepackage{amsthm}
\usepackage{bm}

\usepackage{psfrag,epsf}
\usepackage{enumerate}
\usepackage{url} 
\usepackage{siunitx}
\usepackage{algorithm}
\usepackage{float}
\usepackage{appendix}
\usepackage[noend]{algpseudocode}
\usepackage{subcaption}
\usepackage{xr}

\usepackage{natbib}
\bibliographystyle{plainnat}

\usepackage{graphicx}

\newtheorem{proposition}{Proposition}
\newtheorem{remark}{Remark}
\newtheorem{lemma}{Lemma}

\usepackage[OT1]{fontenc}

\numberwithin{equation}{section}

\numberwithin{equation}{section}

\DeclareFontFamily{U}{mathx}{\hyphenchar\font45}
\DeclareFontShape{U}{mathx}{m}{n}{
      <5> <6> <7> <8> <9> <10>
      <10.95> <12> <14.4> <17.28> <20.74> <24.88>
      mathx10
      }{}
\DeclareSymbolFont{mathx}{U}{mathx}{m}{n}
\DeclareFontSubstitution{U}{mathx}{m}{n}
\DeclareMathAccent{\widebar}{0}{mathx}{"73}

\def\bx{\bm x}
\def\btheta{\bm \theta}
\def\be{\bm e}
\def\bs{\bm s}
\def\U{\mathcal U}
\def\SU{\mathcal SU}
\def\SN{\mathcal SN}
\def\PN{\mathcal PN}
\def\G{\mathcal{G}}
\def\Q{\mathcal{Q}}
\def\iid{\sim^\text{\it iid}}

\def\glhat{\hat{g}_{\l}}
\def\gfhat{\hat{g}_{\f}}
\def\l{\bm l}
\def\f{\bm f}
\def\h{\bm h}
\def\p{p}

\def\flash{{\it flash}}
\def\ERsq{\widebar{R^2}}
\def\precij{\tau_{ij}}
\def\E{\text{E}}

\def\Ef{\text{E}_{\qf}(f_j)}
\def\Efsq{\text{E}_{\qf}(f^2_j)}
\def\Efk{\text{E}_{\qf}f_{kj}}
\def\Efksq{\text{E}_{\qf}(f^2_{kj})}

\def\El{\text{E}_{\ql}(l_i)}
\def\Elsq{\text{E}_{\ql}(l^2_i)}

\def\gfk{g_{\f_k}}
\def\glk{g_{\l_k}}
\def\ghk{g_{\h_k}}

\def\Gl{\G_{\l}}
\def\Gf{\G_{\f}}

\def\ll{\widebar{\l}}
\def\ff{\widebar{\f}}
\def\llsq{\widebar{\l^2}}
\def\ffsq{\widebar{\f^2}}

\def\ebnm{\text{\it EBNM}}
\def\prec{\bm \tau}
\def\T{\mathcal T}
\def\iter{t}

\def\ql{q_{\l}}
\def\qf{q_{\f}}
\def\gl{g_{\l}}
\def\gf{g_{\f}}


\begin{document}

\title{Empirical Bayes Matrix Factorization}


\author{Wei Wang \\
      Department of Statistics \\
       University of Chicago \\
       Chicago, IL, USA \\
       {\texttt weiwang@statistics.uchicago.edu} \\
       \And
        Matthew Stephens \\
        Department of Statistics and Department of Human Genetics \\
       University of Chicago \\
       Chicago, IL, USA \\
       {\texttt mstephens@uchicago.edu} }

\maketitle

\begin{abstract}
Matrix factorization methods, which include Factor analysis (FA) and Principal Components Analysis (PCA), are widely used for inferring and summarizing structure in multivariate data.  Many such methods use 
a penalty or prior distribution to achieve sparse representations (``Sparse FA/PCA"), and a key question
is how much sparsity to induce. Here we introduce a general Empirical Bayes approach to matrix factorization (EBMF), whose key feature is that it estimates the appropriate amount of sparsity by estimating prior distributions from the observed data. The approach is very flexible: it allows for a wide range of different prior families and allows that each component of the matrix factorization may exhibit a different amount of sparsity. The key to this flexibility is the use of a variational approximation, which we show effectively reduces fitting the EBMF model to solving a simpler problem, the so-called ``normal means" problem.  We demonstrate the benefits of 
EBMF with sparse priors through both numerical comparisons with competing methods and through analysis of data from the 
GTEx (Genotype Tissue Expression) project on genetic associations across 44 human tissues. In numerical comparisons EBMF often provides more accurate inferences than other methods. In the GTEx data, EBMF identifies interpretable structure that agrees with known relationships among human tissues. Software implementing our approach is available at \url{https://github.com/stephenslab/flashr}.
\end{abstract}

\begin{keywords}
  {empirical Bayes, matrix factorization, normal means, sparse prior, unimodal prior, variational approximation}
\end{keywords}

\section{Introduction}

Matrix factorization methods are widely used for inferring and summarizing structure in multivariate data.
In brief, these methods represent an observed $n \times p$ data
matrix $Y$ as:
\begin{equation} \label{eqn:fa}
Y = L^T F + E
\end{equation}
where $L$ is an $K \times n$ matrix, $F$ is a $K \times p$ matrix, and $E$ is an $n \times p$ matrix of residuals (whose entries we assume to be normally distributed, although the methods we develop can be generalized to other settings; see Section \ref{sec:nongauss}). Here we adopt the notation and terminology of factor analysis, and refer to $L$ as the ``loadings'' and $F$ as the ``factors''.

The model \eqref{eqn:fa} has many potential applications.
One range of applications arise from ``matrix completion" problems \citep[e.g.][]{fithian2018flexible}:
methods that estimate $L$ and $F$ in \eqref{eqn:fa} from partially observed $Y$ provide a natural and effective way to fill in the missing entries. 
Another wide range of applications come from the desire
to {\it summarize and understand} the structure in 
a matrix $Y$: in \eqref{eqn:fa}
each row of $Y$ is approximated by a linear combination of 
underlying ``factors" (rows of $F$), which---ideally---have some  intuitive or scientific interpretation. For example, suppose $Y_{ij}$  represents the rating of a user $i$ for a movie $j$. Each factor might represent a genre of movie (``comedy", ``drama", ``romance", ``horror'' etc), 
and the ratings for a user $i$ could be written as a linear combination
of these factors, with the weights (loadings) representing how much individual
$i$ likes that genre. Or, suppose $Y_{ij}$ represents the expression of
gene $j$ in sample $i$. Each factor might represent a module of co-regulated genes,
and the data for sample $i$ could be written as a linear combination of these factors,
with the loadings representing how active each module is in each sample. Many other examples 
could be given across many fields, including psychology \citep{ford1986application}, econometrics \citep{bai2008large}, natural language processing \citep{bouchard-etal-2015-matrix}, population genetics \citep{Engelhardt2010}, and functional genomics \citep{stein2018enter}.

The simplest approaches to estimating $L$ and/or $F$ in \eqref{eqn:fa} are based on maximum likelihood or least squares.  For example, Principal Components Analysis
(PCA)---or, more precisely, truncated Singular Value Decomposition (SVD)---can be interpreted as fitting \eqref{eqn:fa} by least squares, assuming that columns of $L$ are orthogonal and columns of $F$ are orthonormal \citep{eckart.young.1936}. And classical factor analysis (FA) corresponds to maximum likelihood estimation of $L$, assuming that the elements of $F$ are independent standard normal and allowing different residual variances for each column of $Y$ \citep{rubin1982algorithms}. While
these simple methods remain widely used, in the last two decades
researchers have focused considerable attention on obtaining more accurate and/or more interpretable estimates, either by imposing additional constraints  \citep[e.g.~non-negativity;][]{lee:1999} or by regularization using a penalty term \citep[e.g.][]{jolliffe2003modified,Witten2009,mazumder2010spectral,hastie2015matrix,fithian2018flexible}, or a prior distribution \citep[e.g.][]{bishop1999variational,attias1999independent,ghahramani2000variational,West2003}. In particular, many authors have noted the benefits of sparsity assumptions on
$L$ and/or $F$---particularly in applications where interpretability of the estimates is desired---and there now exists a wide range of methods that attempt to induce sparsity in these models \citep[e.g.][]{sabatti2005bayesian,H.Zou2006,Pournara2007,Carvalho2008,Witten2009,Engelhardt2010,Knowles2011,bhattacharya2011sparse,mayrink2013sparse,yang2014sparse,10.1371/journal.pcbi.1004791,hore2016tensor,rovckova2016fast,srivastava2017expandable,kaufmann2017identifying,fruhwirth2018sparse,zhao2018fast}.  
 Many of these methods induce sparsity in the loadings only, although some induce sparsity in both loadings and factors. 
 
In any statistical problem involving sparsity, 
a key question is how strong the sparsity should be. 
In penalty-based methods this is controlled by the strength 
and form of the penalty, whereas in Bayesian methods
it is controlled by the prior distributions.
In this paper we take an Empirical Bayes approach to this problem, exploiting variational approximation methods \citep{Blei2016} to obtain simple algorithms that jointly estimate
the prior distributions for both loadings and factors,
as well as the loadings and factors themselves.

Both EB and variational methods have been previously
used for this problem \citep{bishop1999variational,lim2007variational,raiko2007principal, stegle2010bayesian}. However, most of this previous work
has used simple normal prior distributions that do not induce sparsity. Variational methods that use sparsity-inducing priors include \cite{girolami2001variational}, which uses a Laplace prior on the factors (no prior on the loadings which are treated as free parameters), \cite{hochreiter2010FABIA} which extends this to Laplace priors on both factors and loadings, with fixed values of the Laplace prior parameters; \cite{titsias2011spike},
which uses a sparse ``spike-and-slab" (point normal) prior on the loadings (with the same prior on all $K$ loadings) and a normal prior on the factors; and \cite{hore2016tensor} which uses a spike-and-slab prior
on one mode in a tensor decomposition. (While this
work was in review further examples appeared, including
\citealt{argelaguet2018multi}, which uses normal priors on the loadings and point-normal on the factors.)

Our primary contribution here is to develop and implement a more general EB approach to matrix factorization (EBMF).
This general approach allows for a wide range of potential sparsity-inducing prior distributions on both the loadings and the factors within a single algorithmic framework. 
We accomplish this by
showing that, when using variational methods, fitting EBMF with {\it any} prior family
can be reduced to repeatedly solving a much simpler problem---the ``empirical Bayes normal means" (EBNM) problem---with the same prior family. This feature makes it easy to implement methods for any desired prior family---one simply has to implement a method
to solve the corresponding normal means problem, and then plug this into our algorithm.  This approach can work for both
parametric families (e.g.~normal, point-normal, laplace, point-laplace) and non-parametric families, including the ``adaptive shrinkage" priors (unimodal and scale mixtures of normals) from
 \cite{stephens:2017}. It is also possible to
 accommodate non-negative constraints on either $L$ and/or $F$
 by using non-negative prior families. 
  Even simple versions of our approach---e.g.~ using
  point-normal priors on both factors and loadings--- provide more generality than most existing EBMF approaches and software. 
 
 A second contribution of our work is to highlight similarities and differences between EBMF and penalty-based methods for regularizing $L$ and/or $F$.
  Indeed, our algorithm for fitting EBMF
 has the same structure as commonly-used algorithms 
 for penalty-based methods, 
 with the prior distribution playing a role analogous
to the penalty (see Remark \ref{algorithm_remark} later). While the general correspondence between
estimates from penalized methods and Bayesian posterior modes
(MAP estimates) is well known, the connection here is different, because the EBMF approach is estimating a posterior mean, not a mode (indeed, with sparse priors the MAP estimates of $L$ and $F$ are not useful because they are trivially 0). A key difference between the EBMF approach
and penalty-based methods is that the EBMF prior is estimated
by solving an optimization problem, whereas in penalty-based methods the strength of the penalty is usually chosen by cross-validation. This difference makes it much easier for
EBMF to allow for different levels of sparsity in every factor and every loading: in EBMF one simply uses a different prior for every factor and loading, whereas tuning a separate parameter
for every factor and loading by CV becomes very cumbersome.

The final contribution is that we provide an R software package, \flash{} (Factors and Loadings by Adaptive SHrinkage), implementing our flexible EBMF framework.
We demonstrate the utility of these methods through both numerical comparisons with competing methods and through a scientific application: analysis of data from the  GTEx (Genotype Tissue Expression) project on genetic associations across 44 human tissues. In numerical comparisons \flash{} often provides more accurate inferences than other methods, while remaining computationally tractable for moderate-sized matrices (millions of entries). In the GTEx data, \flash{} highlight both effects that are shared across many tissues (``dense" factors) and effects that are specific to a small number of tissues (``sparse" factors). These sparse factors often highlight similarities between tissues that are known to be biologically related, providing external support for the reliability of the results.





\section{A General Empirical Bayes Matrix Factorization Model} \label{sec:EBMF}

We define the $K$-factor Empirical Bayes Matrix Factorization (EBMF) model as follows: 
\begin{align} \label{eqn:EBMF}
Y  &= \sum_{k =1}^K \l_k \f^T_k + E \\ \label{eqn:EBMF-l}
l_{k1},\dots,l_{kn} &\iid \glk, \quad \glk \in \Gl \\ \label{eqn:EBMF-f}
f_{k1},\dots,f_{kp} &\iid \gfk, \quad \gfk \in \Gf \\ 
\label{eqn:EBMF-e}
E_{ij} &\sim N(0,1/\tau_{ij}) \text{ with } \prec :=(\tau_{ij}) \in \T. 
\end{align}
Here $Y$ is the $n \times p$ observed data matrix, $\l_k$ is an $n$-vector (the $k$th set of ``loadings"), $\f_k$ is a $p$-vector (the $k$th ``factor"), $\Gl$ and $\Gf$  are pre-specified (possibly non-parametric) families of distributions, $\glk$ and $\gfk$ are unknown ``prior" distributions that are to be estimated, 
$E$ is an $n \times p$ matrix of independent error terms,
and $\prec$ is an unknown $n \times p$ matrix of precisions ($\tau_{ij}$)
which is assumed to lie in some space $\T$. (This allows structure to be imposed on $\prec$, such as constant precision, $\tau_{ij}=\tau$, or column-specific precisions, $\tau_{ij} = \tau_j$, for example.) 
Our methods allow that some elements of $Y$ may 
be ``missing", and can estimate the missing values (Section \ref{sec:missing}). 

The term ``Empirical Bayes" in EBMF means we
fit \eqref{eqn:EBMF}-\eqref{eqn:EBMF-e}
by obtaining {\it point estimates} for the priors $\glk,\gfk$, $(k=1,\dots,K)$ and approximate the posterior distributions
for the parameters $\l_k,\f_k$ given those point estimates. This contrasts with a ``fully Bayes" approach that, instead of obtaining point estimates for $\glk,\gfk$, would 
integrate over uncertainty in the estimates. 
This would involve specifying prior distributions for $\glk,\gfk$ as well as (perhaps substantial) additional computation. The EB approach has the advantage of simplicity --
both conceptually and computationally---while enjoying many
of the benefits of a fully Bayes approach.
In particular it allows for sharing of information across
elements of each loading/factor. For example, if the data suggest that a particular factor, $\f_k$, is sparse, then this will be reflected in a sparse estimate of $\gfk$, and subsequently strong shrinkage of the smaller elements of $f_{k1},\dots,f_{kp}$ towards 0. 
Conversely, when the data suggest a non-sparse
factor then the prior will be dense and the shrinkage less strong. By allowing different prior distributions for each factor and each loading, 
the model has the flexibility to adapt to any combination of sparse and dense loadings and factors. 
However, to fully capitalize on this flexibility one needs suitably flexible prior families
$\Gl$ and $\Gf$ capable of capturing both sparse and dense factors. A key feature of our work is it allows for very flexible prior families, including non-parametric families.

Some specific choices of the distributional families $\Gl$ and $\Gf$ correspond to models used in previous work.
In particular, many previous papers have
studied the case with normal priors, where $\Gl$ and $\Gf$ are both the family of zero-mean normal distributions \citep[e.g.][]{bishop1999variational, lim2007variational,raiko2007principal,nakajima2011theoretical}. This family is particularly simple, having a single hyper-parameter, the prior variance, to estimate for each factor. However, it does not induce sparsity on either $L$ or $F$; indeed, when the matrix $Y$ is fully observed, the estimates of $L$ and $F$ under a normal prior (when using a fully factored variational approximation) are simply scalings of the singular vectors from an SVD of $Y$ \citep{nakajima2011theoretical,nakajima2013global}.  Our work here extends these previous approaches to a much wider range of prior families that {\it do} induce sparsity on $L$ and/or $F$.

We note that the EBMF model \eqref{eqn:EBMF}-\eqref{eqn:EBMF-e} differs
in an important way
from the sparse factor analysis (SFA) methods
in \cite{Engelhardt2010}, which
use a type of Automatic Relevance Determination prior \citep[e.g.][]{tipping2001sparse,wipf2008new} to induce sparsity on the loadings matrix.
In particular, SFA estimates
a separate hyperparameter for every element of the loadings matrix, with no sharing of information
across elements of the same loading. In contrast,
EBMF estimates a single {\it shared} prior distribution for elements of each loading,
which, as noted above, allows for
sharing of information across elements
of each loading/factor. 

\section{Fitting the EBMF Model}

To simplify exposition we begin with the case $K=1$ (``rank 1"); see Section \ref{sec:KEBMF} for the extension to general $K$. To simplify notation we
assume the families $\Gl,\Gf$ are the same, so we can write $\Gl=\Gf =\G$. To further lighten notation in the case $K=1$ we use $\gl, \gf, \l, \f$ instead of ${\gl}_1,{\gf}_1,\l_1,\f_1$. 

Fitting the EBMF model involves estimating all of $\gl, \gf, \l,\f,\prec$. A standard EB approach would be to do this in two steps: 
\begin{itemize}
\item Estimate $\gl,\gf$ and $\prec$, by maximizing the likelihood:
\begin{equation} \label{eqn:lik2}
L(\gl,\gf,\prec) := \int \int p(Y | \l, \f, \prec) \, \gl(dl_1)\dots \gl(dl_n) \, \gf(df_1)\dots \gf(df_p)
\end{equation}
over $\gl,\gf \in \G, \prec \in \T$. (This optimum will typically not be unique because of identifiability issues; see Section \ref{sec:identifiability}.)
\item Estimate $\l$ and $\f$ using their posterior distribution: $p(\l,\f | Y, \glhat,\gfhat,\hat{\prec})$.
\end{itemize}
However, both these two steps are difficult, even for very simple choices of $\G$. Instead, following previous work (see Introduction for citations) we use variational approximations 
to approximate this approach. Although variational approximations are known to typically under-estimate uncertainty in posterior distributions, our focus here is on obtaining useful point estimates for $\l,\f$; results shown later demonstrate that the variational approximation can perform well in this task.

\subsection{The Variational Approximation}
 
The variational approach---see \citet{Blei2016} for review---begins by writing the log of the likelihood \eqref{eqn:lik2} as:
\begin{align}
\label{log-likelihood}
l(\gl, \gf, \prec)  &:= \log L(\gl,\gf,\prec) \\
& = F(q,\gl,\gf,\prec) + D_{KL}(q||p)
\end{align}
where
\begin{equation}
\label{elbow}
F(q,\gl,\gf,\prec) = \int q(\l,\f)\log \frac{p(Y,\l,\f | \gl,\gf,\prec)}{q(\l,\f)} \, d\l \, d\f,
\end{equation}
and 
\begin{equation}
\label{KLdistance}
D_{KL}(q||p) = -\int  q(\l,\f) \log \frac{p(\l,\f |Y,\gl,\gf,\prec) }{q(\l,\f)} \, d\l \, d\f
\end{equation}
is the Kullback--Leibler divergence from $q$ to $p$. This identity holds for any distribution $q(\l,\f)$.
Because $D_{KL}$ is non-negative, it follows that $F(q,\gl,\gf,\prec)$ is a lower bound for the log likelihood:
\begin{equation}
l(\gl,\gf,\prec) \geq F(q,\gl,\gf,\prec)
\end{equation} 
with equality when $q(\l,\f) = p(\l,\f|Y,\gl,\gf,\prec)$.

In other words,
\begin{equation} \label{eqn:log-lik-max}
l(\gl, \gf, \prec) = \max_{q} F(q, \gl, \gf, \prec),
\end{equation}
where the maximization is over all possible distributions $q(\l,\f)$.
Maximizing $l(\gl, \gf, \prec)$ can thus be viewed as maximizing $F$ over $q, \gl, \gf, \prec$.
However, as noted above, this maximization is difficult.
The variational approach simplifies the problem by maximizing $F$ but restricting the family of distributions for $q$.
Specifically, the most common variational approach---and the one we consider here---restricts $q$ to the family $\Q$ of distributions that ``fully-factorize": 
\begin{equation}
\Q = \left\{q : q(\l,\f) = \prod_{i=1}^n q_{l,i}(l_i) \prod_{j=1}^p q_{f,j}(f_j)\right\}.
\end{equation}
The variational approach seeks to optimize $F$ over $q, \gl, \gf, \prec$
with the constraint $q \in Q$. 
For $q \in Q$ we can write $q(\l,\f) = \ql(\l) \qf(\f)$ where $\ql(\l) = \prod_{i=1}^n q_{l,i}(l_i)$ and $\qf(\f)= \prod_{j=1}^p q_{f,j}(f_j)$,
and we can consider the problem as maximizing $F(\ql, \qf, \gl, \gf, \prec)$.

\subsection{Alternating Optimization}

We optimize $F(\ql,\qf,\gl,\gf,\prec)$ by alternating 
between optimizing over variables 
related to $\l$ [$(\ql,\gl)$], over variables related to $\f$ [$(\qf,\gf)$], and over $\prec$. Each of these steps is guaranteed
to increase (or, more precisely, not decrease) $F$, and convergence can be assessed by (for example) stopping when these optimization steps yield a very small increase in $F$. Note that $F$ may be multi-modal, and there is no guarantee that the algorithm will converge to a global optimum. The approach is summarized in Algorithm \ref{alg:r1}.

\begin{algorithm}[H]
\caption{Alternating Optimization for EBMF (rank 1)} \label{alg:r1}
\begin{algorithmic}[1]
\Require Initial values $\ql^{(0)},\qf^{(0)},\gl^{(0)}, \gf^{(0)}$ 
\State $\iter \gets 0$
\Repeat
\State $\iter \gets \iter + 1$
\State $\prec^{(\iter)} \gets \arg \max_{\prec} F(\ql^{(\iter-1)}, \qf^{(\iter-1)},  \gl^{(\iter-1)},\gf^{(\iter-1)},\prec)  $
\State $\ql^{(\iter)}, \gl^{(\iter)} \gets \arg \max_{\ql, \gl}F(\ql, \qf^{(\iter - 1)},\gl,\gf^{ (\iter - 1)},\prec^{(\iter)})  $.
\State $\qf^{(\iter)}, \gf^{(\iter)} \gets \arg \max_{\qf, \gf}F(\ql^{(\iter)}, \qf,  \gl^{(\iter)},\gf,\prec^{(\iter)})  $.
\Until{converged} \\
\Return $\ql^{(\iter)}, \qf^{(\iter )},\gl^{(\iter)},\gf^{ (\iter )},\prec^{(\iter)}$
\end{algorithmic}
\end{algorithm}

The key steps in Algorithm \ref{alg:r1} are the maximizations in Steps 4-6. 

Step 4, the update of $\prec$, involves computing the expected squared residuals:
\begin{align} \label{eqn:R2r1}
\ERsq_{ij} &:= \E_{\ql,\qf}[(Y_{ij}-l_i f_j)^2] \\
& = [Y_{ij} - \El \Ef]^2 - \El^2 \Ef^2 + \Elsq \Efsq.
\end{align}
This is straightforward provided the first and second moments of $\ql$ and $\qf$ are available
(see Appendix \ref{sec:est_var} for details).

Steps 5 and 6 are essentially identical except for switching the role of $\l$ and $\f$. One of our key results is that each of these steps can be achieved by solving
a simpler problem---the Empirical Bayes normal means (EBNM) problem. The next subsection (\ref{sec:ebnm}) describes
the EBNM problem, and the following subsection
(\ref{sec:ebmf_and_ebnm}) details how this can be used to solve Steps 5 and 6.

\subsection{The EBNM Problem} \label{sec:ebnm}

Suppose we have observations $\bx = (x_1,\dots,x_n)$ of underlying quantities $\btheta=(\theta_1,\dots,\theta_n)$,
with independent Gaussian errors with known standard deviations $\bs = (s_1,\dots,s_n)$. 
Suppose further that the elements of $\btheta$  are assumed i.i.d. from some distribution, $g \in \G$.
That is,
 \begin{align} \label{eqn:normalmeans}
\bx | \btheta & \sim N_n(\btheta, \text{diag}(s_1^2,\dots,s_n^2)) \\ \label{eqn:theta}
\theta_1,\dots,\theta_n & \iid g, \quad g \in \G,
\end{align}
where $N_n(\bm{\mu},\Sigma)$ denotes the $n$-dimensional normal distribution with mean $\bm{\mu}$ and covariance matrix $\Sigma$. 

By solving the EBNM problem we mean fitting the model \eqref{eqn:normalmeans}-\eqref{eqn:theta}
by the following two-step procedure:
\begin{enumerate}
\item Estimate $g$ by maximum (marginal) likelihood:
\begin{equation} \label{eqn:ghat}
\hat{g} = \arg \max_{g \in \G} \prod_j \int p(x_j | \theta_j, s_j) g(d\theta_j). 
\end{equation} 
\item Compute the posterior distribution for $\btheta$ given $\hat{g}$, 
\begin{equation} \label{eqn:post}
\p(\btheta | \bx, \bs, \hat{g}) \propto \prod_j \hat{g}(\theta_j) p(x_j | \theta_j, s_j).
\end{equation}
Later in this paper we will have need for the posterior first and second moments, so we define them here for convenience:
\begin{align} \label{eqn:postmean}
\bar{\theta}_j &:= E(\theta_j | \bx, \bs, \hat{g})  \\ \label{eqn:postmean2}
\widebar{\theta^2}_j &:= E(\theta^2_j | \bx, \bs, \hat{g}).
\end{align}
\end{enumerate}
Formally, this procedure defines a mapping
(which depends on the family $\G$) from the known quantities $(\bx,\bs)$, to $(\hat{g},\p)$, where $\hat{g},\p$ are given in \eqref{eqn:ghat} and \eqref{eqn:post}. 
We use $\ebnm$ to denote this mapping:
\begin{equation} \label{eqn:ebnm}
\ebnm(\bx,\bs) = (\hat{g},\p).
\end{equation}

\begin{remark} \label{ebnm_remark}
Solving the EBNM problem is central to
all our algorithms, so it is worth some study. A key point is that the EBNM problem provides an attractive and flexible way to induce shrinkage and/or sparsity in estimates of $\theta$. For example, if $\theta$ is truly sparse, with many elements at or near 0, then the estimate $\hat{g}$ will typically have considerable mass near 0, and the posterior means \eqref{eqn:postmean} will be ``shrunk'' strongly toward 0 compared with the original observations. 
In this sense solving the EBNM problem
can be thought of as 
a model-based analogue of thresholding-based methods, with the advantage that
by estimating $g$ from the data the EBNM approach automatically adapts to provide an appropriate level of shrinkage. These ideas have been used in wavelet denoising \citep{Clyde2000,johnstone2004needles,Johnstone2005Empirical,xing2016smoothing},
and false discovery rate estimation \citep{thomas:1985,stephens:2017} for example. 
Here we apply them to matrix factorization problems. 
\end{remark}

\subsection{Connecting the EBMF and EBNM Problems} \label{sec:ebmf_and_ebnm}

The EBNM problem is well studied, and can be solved reasonably easily for many choices of $\G$
\citep[e.g.][]{johnstone2005ebayesthresh,koenker2014JASA,stephens:2017}. In Section \ref{sec:implementation} we give specific examples; for now our main
point is that {\it if} one
can solve the EBNM problem for a particular
choice of $\G$ then it can be used to implement
Steps 5 and 6 in Algorithm \ref{alg:r1} for the
corresponding EBMF problem. The following Proposition formalizes this for
Step 5 of Algorithm \ref{alg:r1};
a similar proposition holds for Step 6 (see also Appendix \ref{app:ebfa_k}).

\begin{proposition} \label{prop:EBNM}
Step 5 in Algorithm~\ref{alg:r1} is solved by solving an EBNM problem. Specifically 
\begin{equation}
\arg \max_{\ql, \gl}F(\ql, \qf,\gl,\gf,\prec)   =  \ebnm (\hat{\l}(Y,\ff,\ffsq,\prec),\bs_l(\ffsq,\prec))
\end{equation}
where
the functions $\hat{\l}: R^{n \times p} \times R^p \times R^p \times R^{n \times p} \rightarrow R^n$ and $\bs_l:R^p \times R^{n \times p} \rightarrow R^n$ are given by
\begin{align} \label{eqn:lhatfn}
\hat{\l}(Y,\bm{v},\bm{w},\prec)_i &:=  \frac{\sum_j \precij Y_{ij} v_j}{\sum_j \precij w_j}, \\ \label{eqn:slfn}
\bm{s}_{\l}(\bm{w},\prec)_i &:= \left(\sum_j \precij w_j \right)^{-0.5},
\end{align}
and  $\ff, \ffsq \in R^p$ denote the vectors whose elements are the first and second moments of $\f$ under $\qf$:
\begin{align}
\ff & := (\text{E}_{\qf}(f_j)) \\
\ffsq &:= (\text{E}_{\qf}(f_j^2)).
\end{align}
\end{proposition}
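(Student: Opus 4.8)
The plan is to fix $\qf,\gf$ and $\prec$, expand the variational objective $F(\ql,\qf,\gl,\gf,\prec)$ as a function of $(\ql,\gl)$ alone, and recognise it --- up to an additive constant --- as the variational lower bound of a suitable EBNM problem; the conclusion then follows from the same reasoning that produced \eqref{log-likelihood}--\eqref{eqn:log-lik-max}, now applied inside that EBNM problem. First I would use the factorisation $q(\l,\f)=\ql(\l)\qf(\f)$ together with $p(Y,\l,\f\mid\gl,\gf,\prec)=p(Y\mid\l,\f,\prec)\,p(\l\mid\gl)\,p(\f\mid\gf)$ to write
\begin{equation*}
F(\ql,\qf,\gl,\gf,\prec) = \E_{\ql\qf}\!\left[\log p(Y\mid\l,\f,\prec)\right] + \E_{\ql}\!\left[\log\frac{p(\l\mid\gl)}{\ql(\l)}\right] + C,
\end{equation*}
where $C$ gathers the terms $\E_{\qf}[\log(p(\f\mid\gf)/\qf(\f))]$, which do not involve $(\ql,\gl)$ and are therefore irrelevant to the maximisation.

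Next I would expand the Gaussian log-likelihood, $\log p(Y\mid\l,\f,\prec) = \mathrm{const} - \tfrac12\sum_{ij}\precij(Y_{ij}-l_if_j)^2$, take the inner expectation over $\qf$ coordinate-wise to obtain, as a function of $\l$, the quantity $\mathrm{const} - \tfrac12\sum_{ij}\precij\bigl(Y_{ij}^2 - 2Y_{ij}l_i\,\Ef + l_i^2\,\Efsq\bigr)$ (this is exactly the computation behind \eqref{eqn:R2r1}), then group terms by $i$ and complete the square in $l_i$. Writing $a_i:=\sum_j\precij\,\Efsq$ and $b_i:=\sum_j\precij Y_{ij}\,\Ef$, this equals $-\tfrac12\sum_i a_i(l_i-b_i/a_i)^2$ plus a term free of $\l$. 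By \eqref{eqn:lhatfn}--\eqref{eqn:slfn} we have precisely $b_i/a_i = \hat{\l}(Y,\ff,\ffsq,\prec)_i$ and $a_i^{-1/2} = \bs_l(\ffsq,\prec)_i =: s_i$, so $-\tfrac12 a_i(l_i-\hat l_i)^2 = \log p(\hat l_i\mid l_i,s_i) + (\text{term independent of }l_i)$, with $p(\cdot\mid l_i,s_i)$ the $N(l_i,s_i^2)$ density. Substituting back and taking the outer $\E_{\ql}$,
\begin{equation*}
F(\ql,\qf,\gl,\gf,\prec) = \E_{\ql}\!\left[\sum_i\log p(\hat l_i\mid l_i,s_i) + \log\frac{p(\l\mid\gl)}{\ql(\l)}\right] + C',
\end{equation*}
and the bracketed expression is exactly the variational lower bound for the EBNM model \eqref{eqn:normalmeans}--\eqref{eqn:theta} with data $\bx=\hat{\l}(Y,\ff,\ffsq,\prec)$, standard deviations $\bs=\bs_l(\ffsq,\prec)$, and the same family $\G$.

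Finally I would apply the argument behind \eqref{log-likelihood}--\eqref{eqn:log-lik-max} to this EBNM lower bound: for fixed $\gl$, its maximum over $\ql$ equals $\log\prod_i\int p(\hat l_i\mid l_i,s_i)\,\gl(dl_i)$ and is attained at the posterior $\ql=p(\,\cdot\mid\bx,\bs,\gl)$; since the EBNM model is independent across $i$, this posterior automatically has product form, so the fully-factorised restriction defining $\Q$ is not binding here. Maximising the resulting profiled objective over $\gl\in\G$ is then exactly the marginal-likelihood problem \eqref{eqn:ghat}, yielding $\hat g_{\l}$, and the accompanying optimal $\ql$ is the posterior \eqref{eqn:post}; together these constitute $\ebnm(\hat{\l}(Y,\ff,\ffsq,\prec),\bs_l(\ffsq,\prec))$, as claimed. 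I expect the only genuinely delicate point to be this last step --- verifying that restricting $\ql$ to product form costs nothing, which hinges on the coordinatewise independence of the induced normal-means model and hence of its posterior under any $\gl$ --- together with the minor regularity observation that $a_i=\sum_j\precij\,\Efsq>0$ is needed for $\hat l_i$ and $s_i$ to be well defined; Step 2 is an elementary square-completion and Step 6 is the same argument with the roles of $\l$ and $\f$ exchanged.
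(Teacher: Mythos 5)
Your proposal is correct and follows essentially the same route as the paper's proof (given for the general Proposition~\ref{prop:rK}, of which Proposition~\ref{prop:EBNM} is the $K=1$ case): isolate the $(\ql,\gl)$-dependent part of $F$, obtain the quadratic form with $a_i=\sum_j\precij\Efsq$ and $b_i=\sum_j\precij Y_{ij}\Ef$, and identify it with the variational objective of the normal means problem with data $b_i/a_i=\hat l_i$ and variances $1/a_i=s_i^2$ --- the step you derive inline by completing the square is exactly what the paper packages as Lemma~\ref{Lemma_1}. Your explicit observations that the EBNM posterior factorizes over $i$ (so restricting $\ql$ to product form costs nothing) and that $a_i>0$ is needed for $\hat l_i,s_i$ to be well defined are points the paper leaves implicit.
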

\begin{proof}
See Appendix \ref{app:ebfa_k}.
\end{proof}


For intuition into where the EBNM in Proposition
\ref{prop:EBNM} comes from, consider
estimating $\l,\gl$ in \eqref{eqn:EBMF}
{\it with $\f$ and $\prec$ known}. 
The model then becomes $n$ independent regressions of the rows of $Y$ on $\f$, and the maximum likelihood estimate for $\l$ has elements:
\begin{equation} \label{eqn:lhat}
\hat{l}_i = \frac{\sum_j \precij Y_{ij}f_j}{\sum_j \precij f^2_j},
\end{equation}
with standard errors 
\begin{equation} \label{eqn:sl}
s_i = \left(\sum_j \precij f_j^2 \right)^{-0.5}.
\end{equation}
Further, it is easy to show that
\begin{equation} \label{eqn:lhat-dist}
\hat{l}_i \sim N(l_i,s^2_i).
\end{equation}

Combining \eqref{eqn:lhat-dist} with the prior
\begin{equation} \label{eqn:lprior}
l_1,\dots,l_n \iid \gl, \quad \gl \in \G 
\end{equation}
yields an EBNM problem. 

The EBNM in Proposition \ref{prop:EBNM} is 
the same as the EBNM \eqref{eqn:lhat-dist}-\eqref{eqn:lprior} ,
but with the terms $f_j$ and $f_j^2$ replaced with their 
expectations under $\qf$. Thus, 
the update for $(\ql,\gl)$ in Algorithm 1, with $(\qf,\gf,\prec)$ fixed, is closely connected to solving the EBMF problem for ``known $\f,\prec$''.

\subsection{Streamlined Implementation Using First and Second Moments}

Although Algorithm \ref{alg:r1}, as written, optimizes over $(\ql, \qf, \gl, \gf)$, in practice each step requires only
the first and second moments of the distributions $\ql$ and $\qf$. For example, the EBNM problem in Proposition 1 involves $\ff$ and $\ffsq$ and not $\gf$. Consequently, we can simplify implementation by keeping track of only those moments.
In particular, when solving the normal means
problem, $\ebnm(\bx,\bs)$ in 
\eqref{eqn:ebnm}, we need only return the posterior first
and second moments \eqref{eqn:postmean} and \eqref{eqn:postmean2}.
This results in a streamlined and intuitive implementation, summarized in Algorithm \ref{alg:r1b}.

\begin{algorithm}[H]
\caption{Streamlined Alternating Optimization for EBMF (rank 1)} \label{alg:r1b}
\begin{algorithmic}[1]
\Require A data matrix $Y$ ($n \times p$) 
\Require A function, ${\tt ebnm}(\bx,\bs) \rightarrow (\widebar{\btheta},\widebar{\btheta^2})$, that solves the EBNM problem \eqref{eqn:normalmeans}-\eqref{eqn:theta} and returns the first and second posterior moments \eqref{eqn:postmean}-\eqref{eqn:postmean2}.
\Require A function, ${\tt init}(Y) \rightarrow (\hat{\l},\hat{\f})$ that produces initial estimates for $\l$ (an $n$ vector) and $\f$ (a $p$ vector) given data $Y$. (For example, rank 1 singular value decomposition.)
\State Initialize first moments $(\ll,\ff)$, using $(\ll,\ff) \gets {\tt init}(Y)$
\State Initialize second moments $(\llsq,\ffsq)$, by squaring first moments: $\llsq \gets (\ll_i^2)$ and $\ffsq \gets (\ff_j^2)$. 
\Repeat 
\State Compute the matrix of expected squared residuals $\ERsq_{ij}$ from \eqref{eqn:R2r1}.
\State $\tau_j \gets n/\sum_i \widebar{R^2}_{ij}$. [This update assumes column-specific variances; it can be modified to make other assumptions.]
\State Compute $\hat{\l}(Y,\ff,\ffsq,\prec)$ and standard errors $\bs_l(\llsq,\prec)$, using \eqref{eqn:lhatfn} and \eqref{eqn:slfn}.
\State $(\ll,\llsq) \gets {\tt ebnm}(\hat{\l},\bs_l)$.
\State Compute $\hat{\f}(Y,\ll,\llsq,\prec)$ and standard errors $\bs_f(\llsq,\prec)$ (similarly as for $\hat{l}$ and $\bs_l$; see \eqref{eqn:fhatfn} and \eqref{eqn:sffn}). 
\State $(\ff,\ffsq) \gets {\tt ebnm}(\hat{\f},\bs_f)$.
\Until converged\\
\Return $\ll,\llsq,\ff,\ffsq,\prec$
\end{algorithmic}
\end{algorithm}

\begin{remark} \label{algorithm_remark}
Algorithm \ref{alg:r1b} has a very intuitive form:
it has the flavor of an alternating least squares algorithm, which alternates between estimating $\l$ given $\f$ (Step 6) and $\f$ given $\l$ (Step 8), but with the addition of the {\tt ebnm} step (Steps 7 and 9), which can be thought of as
regularizing or shrinking the estimates: see Remark \ref{ebnm_remark}. This viewpoint highlights connections with related algorithms. For example,  the (rank 1 version of the) SSVD algorithm from \cite{yang2014sparse} has a similar form, but 
uses a thresholding function in place of the {\tt ebnm} function to induce shrinkage and/or sparsity. 
\end{remark}

\subsection{The \texorpdfstring{$K$}{K}-factor EBMF Model} \label{sec:KEBMF}

It is straightforward to extend the variational approach 
to fit the general $K$ factor model \eqref{eqn:EBMF}-\eqref{eqn:EBMF-e}. In brief, we introduce variational
distributions $(q_{\l_k},q_{\f_k})$ for $k=1,\dots,K$,
and then optimize the objective function $F(q_{\l_1},g_{\l_1},q_{\f_1},g_{\f_1}; \dots;q_{\l_K},g_{\l_K},q_{\f_K},g_{\f_K};\prec)$. 
Similar to the rank-1 model, this optimization can be done
by iteratively updating parameters relating to a
single loading or factor, keeping other parameters fixed.
And again we simplify implementation by keeping track of only the first and second moments of the distributions $q_{\l_k}$ and $q_{\f_k}$, which we denote $\ll_k,\llsq_k,\ff_k,\ffsq_k$. 
The updates to $\ll_k,\llsq_k$ (and $\ff_k,\ffsq_k$)
are essentially identical to those for fitting the rank 1 model above, but with $Y_{ij}$ replaced with the residuals obtained by removing the estimated effects of the other $k-1$ factors:
\begin{equation} \label{eqn:resid}
R^k_{ij} := Y_{ij} - \sum_{k'\neq k} \ll_{k'i} \ff_{k'j}.
\end{equation}

Based on this approach we have implemented two algorithms for fitting the $K$-factor model. First, a simple ``greedy'' algorithm, which starts by fitting the rank 1 model, and then adds factors $k=2,\dots,K$, one at a time, optimizing over the new factor parameters before moving on to the next factor.  Second, a ``backfitting'' algorithm \citep{Breiman1985}, which iteratively refines
the estimates for each factor given the estimates for the other factors. Both algorithms are detailed in Appendix \ref{app:ebfa_k}.

\subsection{Selecting \texorpdfstring{$K$}{K}} \label{r1vsr0}

An interesting feature of EB approaches to matrix factorization, noted by \cite{bishop1999variational}, is that they automatically select the number of factors $K$. This is because
the maximum likelihood solution to $\glk,\gfk$ is sometimes 
a point mass on 0 (provided $\G$ includes this distribution). Furthermore, the same is true of the solution to the variational approximation \citep[see also][]{bishop1999variational,stegle2012using}. This means that if $K$ is set sufficiently large then some loading/factor combinations will be optimized to be exactly 0. (Or, in the greedy approach, which adds one factor at a time, the algorithm will eventually add a factor that is exactly 0, at which point it terminates.)

Here we note that the variational approximation may be expected to result in conservative estimation (i.e. underestimation) of $K$ compared with the (intractable) use of maximum likelihood to estimate $\gl,\gf$. We base our argument on the simplest case: comparing $K=1$ vs $K=0$.
Let $\delta_0$ denote the degenerate distribution with all its mass at 0.
Note that the rank-1 factor model \eqref{eqn:EBMF}, with $\gl = \delta_0$ (or $\gf=\delta_0$)
is essentially a ``rank-0" model. 
Now note that the variational lower bound, $F$, is exactly 
equal to the log-likelihood when $\gl=\delta_0$ (or $\gf=\delta_0$). 
This is because if the prior is a point mass at 0 then the posterior is also a point mass,
which trivially factorizes as a product of point masses, and so the variational family $\Q$ includes the true posterior in this case. Since $F$ is a lower bound to the log-likelihood we have the following simple lemma:
\begin{lemma}
If $F(\hat{q},\glhat, \gfhat,\hat\tau)
> F(\delta_0,\delta_0,\delta_0,\hat\tau_0)$ then $l(\glhat,\gfhat,\hat\tau) > l(\delta_0,\delta_0,\hat\tau_0)$.  
\end{lemma}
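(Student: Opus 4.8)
The plan is to obtain the lemma as a short chain of two (in)equalities and one identity, using only facts already established in the excerpt. The two ingredients are: (i) for \emph{every} distribution $q$ one has the lower bound $l(\gl,\gf,\prec)\ge F(q,\gl,\gf,\prec)$ (the inequality displayed just before \eqref{eqn:log-lik-max}), and (ii) this bound is attained with equality when the loading/factor priors are both degenerate at $0$. Granting (i) and (ii), I would simply write
\[
l(\glhat,\gfhat,\hat\tau)\;\ge\;F(\hat q,\glhat,\gfhat,\hat\tau)\;>\;F(\delta_0,\delta_0,\delta_0,\hat\tau_0)\;=\;l(\delta_0,\delta_0,\hat\tau_0),
\]
where the first $\ge$ is (i) applied to the rank-$1$ variational solution $\hat q$, the strict inequality in the middle is precisely the hypothesis of the lemma, and the final $=$ is (ii). This yields the conclusion immediately, so all the content is in verifying (ii).

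To establish (ii), I would argue that taking $\gl=\gf=\delta_0$ in \eqref{eqn:EBMF} forces $\l=\bm 0$ and $\f=\bm 0$ with prior probability one, so the posterior $p(\l,\f\mid Y,\delta_0,\delta_0,\hat\tau_0)$ is the point mass at $(\bm 0,\bm 0)$. That degenerate law factorizes as $\prod_i\delta_0(l_i)\prod_j\delta_0(f_j)$ and hence lies in the fully-factorized family $\Q$, so it is an admissible $q$ in the variational problem; for this $q$ we have $D_{KL}(q\|p)=0$, and the identity \eqref{log-likelihood} then gives $F(\delta_0,\delta_0,\delta_0,\hat\tau_0)=l(\delta_0,\delta_0,\hat\tau_0)$.

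I expect the only real obstacle to be the bookkeeping around degenerate measures: the integral formulas \eqref{elbow}--\eqref{KLdistance} are written with respect to densities, which $\delta_0$ lacks, so one should not apply them verbatim. The cleanest fix is to bypass those formulas and observe that, with $\gl=\gf=\delta_0$, both $l(\delta_0,\delta_0,\hat\tau_0)$ and the variational objective evaluated at the degenerate $q$ collapse to the \emph{same} quantity, namely $\log p(Y\mid \l=\bm 0,\f=\bm 0,\hat\tau_0)$ -- the Gaussian log-likelihood of $Y$ at mean zero -- so equality holds trivially. (If one prefers to stay within the density formalism, an alternative is to take $\gl=N(0,\sigma^2)$, $\gf=N(0,\sigma^2)$, compute $F$ and $l$, and let $\sigma^2\to 0$.) Apart from this point, the proof is the one-line concatenation above.
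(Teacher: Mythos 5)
Your proof is correct and follows essentially the same route as the paper: the same one-line chain $l(\glhat,\gfhat,\hat\tau)\ge F(\hat q,\glhat,\gfhat,\hat\tau)>F(\delta_0,\delta_0,\delta_0,\hat\tau_0)=l(\delta_0,\delta_0,\hat\tau_0)$, with the final equality justified exactly as the paper does, by noting that a point-mass prior yields a point-mass posterior that factorizes and hence lies in $\Q$. Your extra remark on handling the degenerate measure rigorously is a sensible refinement but does not change the argument.
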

\begin{proof}
\begin{equation}
l(\glhat,\gfhat,\hat\tau) \geq F(\hat{q},\glhat, \gfhat,\hat\tau)
> F(\delta_0,\delta_0,\delta_0,\hat\tau_0) = l(\delta_0,\delta_0,\hat\tau_0)
\end{equation}
\end{proof}

Thus, if the variational approximation $F$ favors $\glhat,\gfhat, \hat\tau$
over the rank 0 model, then it is guaranteed that the likelihood would also favor $\glhat,\gfhat,\hat\tau$
over the rank 0 model.
In other words, compared with the likelihood, the variational approximation is conservative in terms of preferring the rank 1 model to the rank 0 model.
This conservatism is a double-edged sword. On the one hand it means
that if the variational approximation finds structure it should be taken seriously. On the other hand
it means that the variational approximation could miss subtle structure. 

In practice Algorithm \ref{alg:r1b} can converge to a local optimum of $F$ that
is not as high as the trivial (rank 0) solution, $F(\delta_0,\delta_0,\delta_0,\hat\tau_0)$. 
We can add a check for this at the end of Algorithm
\ref{alg:r1b}, and set $\glhat=\gfhat=\delta_0$ and $\hat\tau = \hat\tau_0$ when this occurs.

\subsection{Identifiability} \label{sec:identifiability}

In EBMF each loading and factor is identifiable, at best, only up to a multiplicative constant
(provided $\G$ is a scale family). Specifically,
scaling the prior distributions $\gfk$ and $\glk$ by $c_k$ and $1/c_k$ respectively results in the same marginal likelihood, and also results in a corresponding scaling of the posterior distribution on the factors $\f_k$ and loadings $\l_k$ (e.g.~ it scales the posterior first moments by $c_k, 1/c_k$ and the second moments by $c_k^2,1/c_k^2$). However, this non-identifiability is not generally a problem, and if necessary it could be dealt with by re-scaling factor estimates to have norm 1.


\section{Software Implementation: \flash{}}
\label{sec:implementation}

We have implemented Algorithms \ref{alg:r1b}, \ref{alg:greedy} and \ref{alg:backfit} in an R
package, \flash{} (``factors and loadings
via adaptive shrinkage"). These algorithms
can fit the EBMF model for any choice of distributional family $\Gl,\Gf$: the user must simply provide a function to solve the EBNM problem for these prior families. 



One source of functions for solving the EBNM problem is the ``adaptive shrinkage" ({\tt ashr}) package, which implements methods from \cite{stephens:2017}. These methods solve the EBNM problem for several flexible choices of $\G$, including:
\begin{itemize}
\item $\G = \SN$, the set of all scale mixtures of zero-centered normals;
\item $\G = \SU$, the set of all symmetric unimodal distributions, with mode at 0;
\item $\G = \U$, the set of all unimodal distributions, with mode at 0;
\item $\G = \U_+$, the set of all non-negative unimodal distributions, with mode at 0.
\end{itemize}
These methods are computationally stable and efficient, being based on convex optimization methods \citep{koenker2014convex}
and analytic Bayesian posterior computations.
 
We have also implemented functions to solve the
EBNM problem for additional choices of $\G$
in the package {\tt ebnm} (\url{https://github.com/stephenslab/ebnm}). These
include $\G$ being the ``point-normal" family:
\begin{itemize}
    \item $\G = \PN$, the set of all distributions that are a mixture of a point mass at zero and a normal with mean 0.
\end{itemize} 
This choice is less flexible than those in {\tt ashr}, and involves non-convex optimizations, but can be faster.

Although in this paper we focus our examples
on sparsity-inducing priors with $\Gl=\Gf=\G$ we note that our software makes it easy to experiment
with different choices, some of which 
represent novel methodologies. For example, setting $\Gl=\U_{+}$
and $\Gf = \SN$ yields an EB version 
of semi-non-negative matrix factorization \citep{ding2008convex}, and we are aware of no
existing EB implementations for this problem.
Exploring the relative merits of these many possible options in different types of application will be an interesting direction for future work.

\subsection{Missing Data} \label{sec:missing}

If some elements of $Y$ are missing, then this is easily dealt with. For example, the sums over $j$ in \eqref{eqn:lhatfn} and \eqref{eqn:slfn} are simply computed
using only the $j$ for which $Y_{ij}$ is not missing.
This corresponds to an assumption that the missing
elements of $Y$ are ``missing at random'' \citep{rubin1976inference}.
In practice we implement this by setting $\precij=0$ whenever $Y_{ij}$ is missing
(and filling in the missing entries of $Y$ to an arbitrary number). This allows the implementation to exploit standard fast matrix multiplication routines, which cannot handle missing data.
If many data points are missing then it may be helpful
to exploit sparse matrix routines.

\subsection{Initialization} \label{sec:init}

Both Algorithms \ref{alg:r1b} and \ref{alg:greedy} require 
a rank 1 initialization procedure, {\tt init}.
Here, we use the {softImpute} function from the
package {\tt softImpute} \citep{mazumder2010spectral}, with penalty parameter $\lambda=0$, which essentially performs SVD when $Y$ is completely observed, but can also deal with missing values in $Y$.

The backfitting algorithm (Algorithm \ref{alg:backfit})
also requires initialization. One option is to use the greedy algorithm to initialize, which we call ``greedy+backfitting''.

\section{Numerical Comparisons}

We now compare our methods 
with several competing approaches.
To keep these comparisons manageable in scope
we focus attention on 
methods that aim to capture possible sparsity in $L$ and/or $F$.
For EBMF we present
results for two different shrinkage-oriented
prior families, $\G$: the scale mixture of normals ($\G = \SN$), and the point-normal family ($\G=\PN$). We denote these \flash{} and \flash{}\_pn respectively when we need to distinguish.
In addition we consider
Sparse Factor Analysis (SFA) \citep{Engelhardt2010}, SFAmix \citep{Gao2013}, Nonparametric Bayesian Sparse Factor Analysis (NBSFA) \citep{Knowles2011}, Penalized Matrix Decomposition \citep{Witten2009} (PMD, implemented in the R package {\tt PMA}), and Sparse SVD \citep{yang2014sparse} (SSVD, implemented in R package {\tt ssvd}). 
Although the methods we compare against involve
only a small fraction of the very large number of methods
for this problem, the methods were chosen 
to represent a wide range of
different approaches to inducing sparsity: SFA, SFAmix and NBSFA are three Bayesian approaches with quite different approaches to prior specification; PMD is based on a penalized likelihood with $L_1$ penalty on factors and/or loadings; and SSVD is based on iterative thresholding of singular vectors. We also compare with softImpute \citep{mazumder2010spectral}, which
does not explicitly model sparsity in $L$ and $F$,
but fits a regularized low-rank matrix using a nuclear-norm penalty. Finally, for reference
we also use standard (truncated) SVD. 

All of the Bayesian methods (\flash{}, SFA, SFAmix and NBSFA) are ``self-tuning'', at least to some extent, and we applied them here with default values.
According to \cite{yang2014sparse} SSVD is robust to choice of tuning parameters,
so we also ran SSVD with its default values, using the robust option ({\tt method="method"}). The softImpute method has a single
tuning parameter ($\lambda$, which controls the nuclear norm penalty), 
and we chose this penalty by orthogonal cross-validation 
(OCV; Appendix \ref{OCValgorithm}). The PMD method can use two tuning
parameters (one for $\l$ and one for $\f$) to allow different sparsity levels in $\l$ vs $\f$. However, since tuning two parameters can be inconvenient it also has the option to use a single parameter for both $\l$ and $\f$. We used OCV to tune parameters
in both cases, referring to the methods as PMD.cv2 (2 tuning parameters) and PMD.cv1 (1 tuning parameter). 






\subsection{Simple Simulations}

\subsubsection{A Single Factor Example} 

We simulated data with $n = 200, p = 300$ under the
single-factor model \eqref{eqn:EBMF} with sparse loadings, and a non-sparse factor:
\begin{align}
l_i &\sim \pi_0 \delta_0 + (1-\pi_0) \sum_{m = 1}^5 \frac{1}{5}N(0,\sigma_m^2) \\
f_j &\sim N(0,1) 
\end{align}
where $\delta_0$ denotes a point mass on 0, and $(\sigma^2_1,\dots,\sigma^2_5):=(0.25,0.5,1,2,4)$.
We simulated using three different levels of sparsity 
on the loadings, using $\pi_0 = 0.9,0.3,0$. (We set 
the noise precision $\tau = 1,1/16,1/25$ in these three cases
to make each problem not too easy and not too hard.) 

We applied all methods to this rank-1 problem, specifying the true value $K = 1$. (The NBSFA software does not provide the option to fix $K$, so is omitted here.)
We compare methods in their accuracy in estimating the true low-rank structure ($B := \l \f^T$)
using relative root mean squared error:
\begin{equation} \label{eqn:RRMSE}
\text{RRMSE}(\hat{B},B):= \sqrt{\frac{\sum_{i,j} (\hat{B}_{ij}-B_{ij})^2}{\sum_{i,j} B_{ij}^2}}.
\end{equation}



\begin{figure}[tb]
  \centering 
  {\includegraphics[width=15cm]{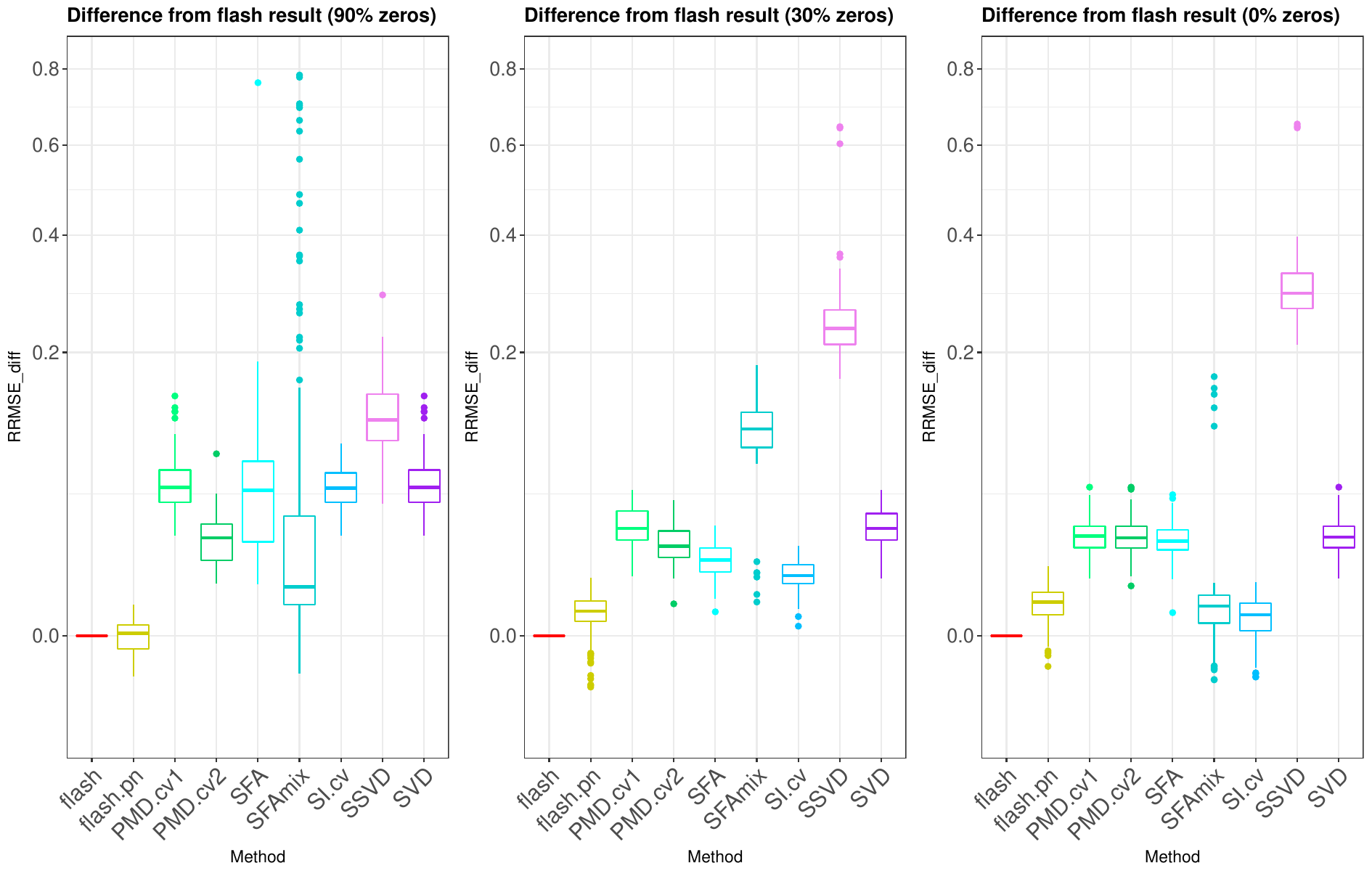}} 
   \caption{Boxplots comparing accuracy of \flash{} with several other methods in a simple rank-1 simulation.
 This simulation involves a single dense factor, and a loading that varies from strong sparsity ($90\%$ zeros, left) to no sparsity (right). Accuracy is measured by difference in each methods RRMSE from the \flash{} RRMSE, with smaller values indicating highest accuracy. The $y$ axis is plotted on a non-linear (square-root) scale to avoid the plots being dominated by poorer-performing methods.}
   \label{fig:r1}
\end{figure}

Despite the simplicity of this simulation, the methods vary greatly in performance
(Figure \ref{fig:r1}).
Both versions of \flash{} consistently outperform all the other methods across all
scenarios (although softImpute performs similarly in the non-sparse case).
The next best performances come from softImpute (SI.cv), PMD.cv2 and SFA, whose
relative performances depend on the scenario. All three consistently improve on,
or do no worse than, SVD. PMD.cv1 performs similarly to SVD.
The SFAmix method performs very variably, sometimes providing
very poor estimates, possibly due to poor convergence of the MCMC algorithm (it is the only method here that uses MCMC). The SSVD method consistently performs worse than 
simple SVD, possibly because it is more adapted to both factors and loadings being sparse
(and possibly because, following \citealt{yang2014sparse}, we did not use CV to tune its parameters).
Inspection of individual results suggests that the poor performance
of both SFAmix and SSVD is often due
to over-shrinking of non-zero loadings to zero.

\subsubsection{A Sparse Bi-cluster Example (Rank 3)} 

An important feature of our EBMF methods is that they estimate separate distributions $\gl,\gf$ for each factor and each loading, allowing them to adapt to any combination of sparsity in the factors and loadings. This 
flexibility is not easy to achieve in other ways.
For example, methods that use CV are generally limited to one or two tuning parameters because of the computational difficulties of searching over a larger space. 

To illustrate this flexibility we simulated data under the factor model \eqref{eqn:EBMF} with $n = 150, p=240$, $K=3, \tau=1/4$, and:
\begin{eqnarray}
\l_{1,i} & \sim N(0,2^2) \quad i=1,\dots,10 \\
\l_{2,i} & \sim N(0,1) \quad i=11,\dots,60 \\
\l_{3,i} & \sim N(0,1/2^2) \quad i = 61,\dots,150 \\
\f_{1,j} & \sim N(0,1/2^2) \quad j=1,\dots,80 \\
\f_{2,j} & \sim N(0,1) \quad j= 81,\dots,160 \\
\f_{3,j} & \sim N(0,2^2) \quad j= 161,\dots,240, 
\end{eqnarray}
with all other elements of $\l_k$ and $\f_k$ set to zero
for $k=1,2,3$.
This example has a sparse bi-cluster structure where  distinct groups of samples are each loaded on only one factor (Figure \ref{fig:K3}a), and both the size of the groups and number of variables in each factor vary. 

\begin{figure}[tb]
  \begin{subfigure}[b]{\textwidth}
  \centering       \includegraphics[width=10cm]{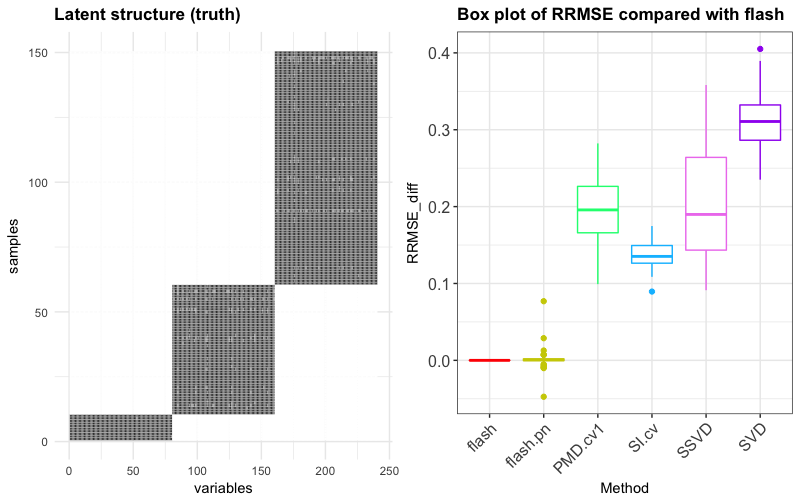}
   \caption{{\bf Left:} Illustration of the true latent rank-3 block structure used in these simulations. {\bf Right} boxplots comparing accuracy of \flash{} with several other methods across 100 replicates. Accuracy is measured by the difference of each methods RRMSE from the \flash{} RRMSE, so smaller is better.}
   \end{subfigure}
   
   \medskip
   
  \begin{subfigure}[b]{\textwidth}
    \centering \includegraphics[width=10cm]{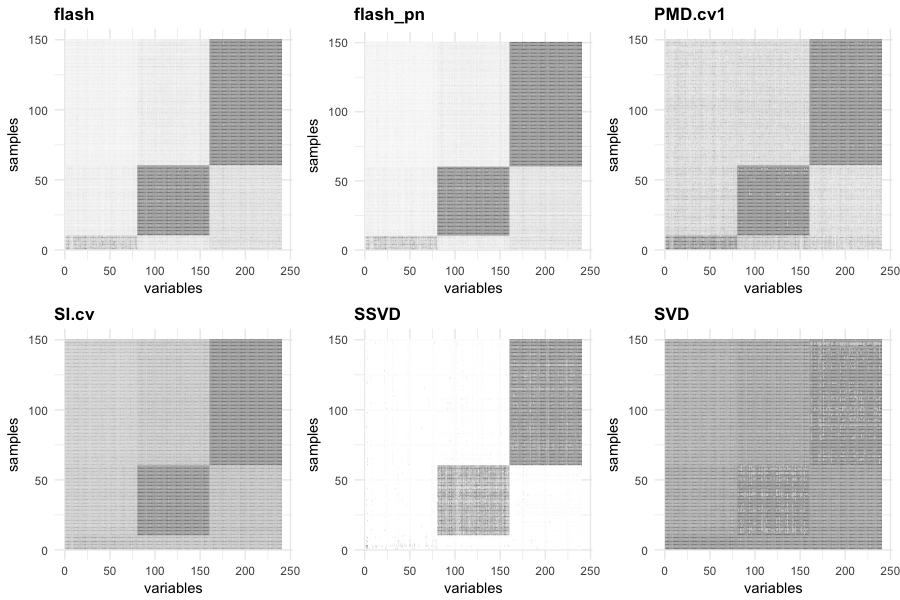} 
   \caption{Illustration of tendency of each method to either over-shrink the signal (SSVD) or under-shrink the noise (SI.cv, PMD.cv1, SVD) compared with \flash{}. Each panel shows the mean absolute value of the estimated structure from each method.}
   \end{subfigure}
   \caption{Results from simulations with sparse bi-cluster structure ($K=3$).}
\label{fig:K3}
\end{figure}

We applied \flash{}, softImpute, SSVD and PMD to this example. (We excluded
SFA and SFAmix since these methods do not model sparsity in both factors and loadings.)
The results (Figure \ref{fig:K3}) show that again
\flash{} consistently outperforms the other methods, and again the next best is softImpute.
On this example both SSVD and PMD outperform SVD. Although SSVD and PMD
perform similarly on average, their qualitative behavior is different: PMD insufficiently shrink the 0 values,
whereas SSVD shrinks the 0 values well but  overshrinks some of the signal, essentially
removing the smallest of the three loading/factor combinations (Figure \ref{fig:K3}b).



\subsection{Missing Data Imputation for Real Data Sets}

Here we compare methods
in their ability to impute missing data using five real data sets. In each case we ``hold out" (mask) some of the data points, and then apply the methods to obtain estimates of the missing values. 
The data sets are as follows:

{\it  \smallskip MovieLens 100K data,}
an (incomplete) $943 \times 1682$ matrix of
user-movie ratings (integers from 1 to 5)
\citep{harper2016movielens}.
Most users do not rate most movies, so the matrix
is sparsely observed (94\% missing), and contains 
about 100K observed ratings. We hold out a fraction of the observed entries and assess accuracy of methods in estimating these.
We centered and scaled the ratings for each user before analysis.

{\it  \smallskip GTEx eQTL summary data,}
a $16\,069 \times 44$ matrix of $Z$ scores computed
testing association of genetic variants (rows)
with gene expression in different human tissues (columns). These data come from the Genotype Tissue Expression (GTEx) project \citep{gtex2015genotype}, which 
assessed the effects of thousands of ``eQTLs" 
across 44 human tissues. (An eQTL is a genetic variant that is associated with expression of a gene.)
To identify eQTLs, the project tested for association between expression and every near-by genetic variant, each test yielding a $Z$ score. The data used here are the $Z$ scores for the most significant genetic variant for each gene (the ``top'' eQTL). See Section \ref{GTExdata} for more detailed analyses of these data.

{\it  \smallskip Brain Tumor data,} a $43 \times 356$ matrix of gene expression measurements
 on 4 different types of brain tumor  \citep[included in
the {\tt denoiseR} package,][]{josse2018denoiser}.
 We centered each column before analysis.
 
{\it  \smallskip Presidential address data,}
a $13 \times 836$ matrix of word counts
from the inaugural addresses of 13 US presidents (1940--2009) \citep[also included in
the {\tt denoiseR} package,][]{josse2018denoiser}.
Since both row and column means 
vary greatly we centered and scaled both rows and columns before analysis, using the {\tt biScale} function from {\tt softImpute}.



{\it \smallskip Breast cancer data,} a $251 \times 226$ matrix of gene expression measurements from \cite{Carvalho2008}, which were used as an example in the paper introducing NBSFA \citep{Knowles2011}. Following
\cite{Knowles2011} we centered each column (gene) before analysis.


\begin{figure}[tb]
  \centering 
  {\includegraphics[width=13cm]{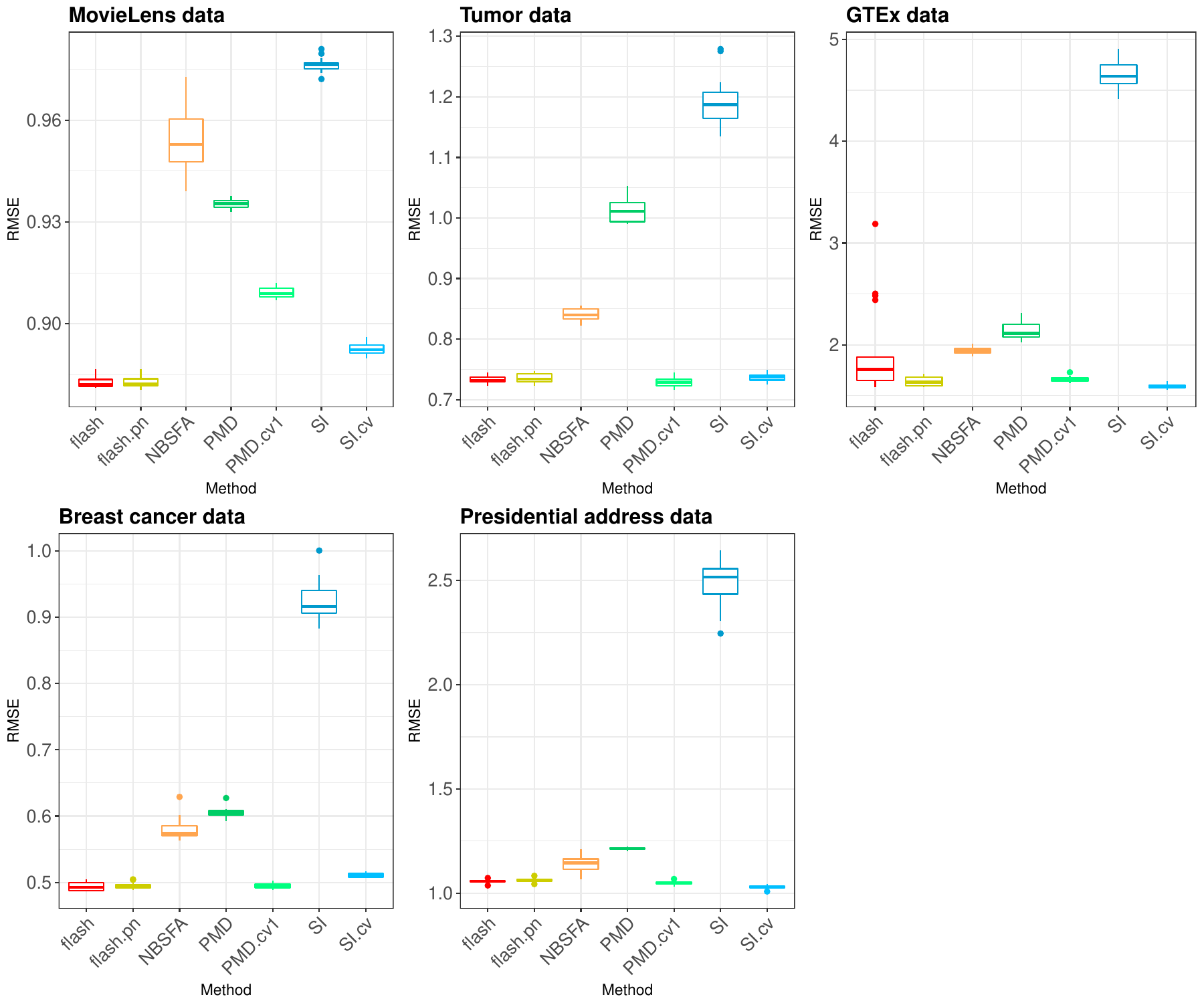}}
   \caption{Comparison of the accuracy of different methods in imputing missing data. Each panel shows a boxplot of error rates (RMSE) for 20 simulations based on masking observed entries in a real data set. }
\label{fig_4}
\end{figure}


\bigskip

Among the methods considered above, only
\flash{}, PMD and softImpute can handle missing data. We add
NBSFA \citep{Knowles2011} to these comparisons. 
To emphasize the importance of parameter tuning
we include results for PMD and softImpute with default settings (denoted PMD, SI) as well as using cross-validation (PMD.cv1, SI.cv). 

For these real data the appropriate value of $K$ is, of course, unknown. Both \flash{} and NBSFA automatically estimate $K$. For PMD and softImpute 
we specified $K$ based on the values inferred by \flash{} and NBSFA. (Specifically, we used $K=10,30,20,10,40$ respectively for the five data sets.)

We applied each method
to all 5 data sets, using 10-fold OCV (Appendix \ref{OCValgorithm}) to mask data points for imputation, repeated 20 times (with different random number seeds) for each data set. We measure imputation accuracy 
using root mean squared error (RMSE): 
\begin{equation}
\label{missRMSE}
\text{RMSE}(\hat{Y},Y; \mathit{\Omega}) = \sqrt{\frac{1}{|\mathit{\Omega|}}\sum_{ij \in \mathit{\Omega}}(Y_{ij} - \hat{Y}_{ij})^2}.
\end{equation}
where $\mathit{\Omega}$ is the set of indices of the held-out data points.

The results are shown in Figure \ref{fig_4}. Although the ranking of methods varies among data sets, \flash{}, PMD.cv1 and SI.cv perform similarly on average, and consistently outperform NBSFA, which in turn typically outperforms (untuned) PMD and unpenalized softImpute. 
These results highlight the importance of appropriate tuning for the penalized methods, and also the effectiveness of the EB method in \flash{} to provide
automatic tuning. 

In these comparisons, as in the simulations, the two \flash{} methods typically performed similarly. The exception is the GTEx data, where the scale mixture of normals ($\G=\SN$) performed worse. Detailed investigation revealed this
to be due to a very small number of very large ``outlier'' imputed values, well outside the range of the observed data, which grossly inflated RMSE. These outliers were so extreme that it should be possible to implement a filter to avoid them. However, 
we did not do this here
as it seems useful to highlight this unexpected behavior.
(Note that this occurs only when data are missing, and even then only in one of the five data sets considered here.)

\subsection{Sharing of Genetic Effects on Gene Expression Among Tissues}
\label{GTExdata}

To illustrate \flash{} in a scientific application, we applied it to the GTEx data
described above, a $16,069 \times 44$ matrix of $Z$ scores, with $Z_{ij}$ reflecting the strength (and direction) of effect of eQTL $i$ in tissue $j$.
We applied \flash{} with $\G=\SN$
using the greedy+backfitting algorithm (i.e.~the backfitting algorithm, initialized using the greedy algorithm). 


 

The \flash{} results yielded 26 factors (Figure \ref{gtexloading1}-\ref{gtexloading2}) which 
summarize the main patterns of eQTL
sharing among tissues (and, conversely, the main  patterns of tissue-specificity).
For example, the first factor has approximately equal weight for every tissue, and reflects
the fact that many eQTLs show similar effects across all 44 tissues. The second factor
has strong effects only in the 10 brain tissues, from which we infer that some eQTLs show much stronger effects in brain tissues than other tissues. 

Subsequent factors tend to be sparser, and many have a strong effect in only one tissue, capturing ``tissue-specific" effects. For example, the 3rd factor shows a strong effect only in whole blood, and captures eQTLs that have much stronger effects in whole blood than other tissues. (Two tissues, ``Lung" and ``Spleen", show very small effects in this factor but with the same sign as blood. This is intriguing since the lung has recently been found to make blood cells---see \citealt{Lefrancais2017}---and a key role of the spleen is storing of blood cells.)
Similarly Factors 7, 11 and 14 capture effects specific to ``Testis", ``Thyroid" and ``Esophagus Mucosa" respectively.

A few other factors show strong effects in a small number of tissues that are known to be
biologically related, providing support that the factors identified are scientifically meaningful. 
For example, factor 10 captures the two tissues related to the cerebellum, ``Brain Cerebellar Hemisphere" and ``Brain Cerebellum".
Factor 19 captures tissues related to female reproduction, ``Ovary", ``Uterus" and ``Vagina". Factor 5 captures ``Muscle Skeletal", with small but concordant effects in the heart tissues (``Heart Atrial Appendage" and ``Heart Left Ventricle"). Factor 4, captures the two skin tissues 
(``Skin Not Sun Exposed Suprapubic", ``Skin Sun Exposed Lower leg") and also ``Esophagus Mucosa", possibly reflecting the sharing of squamous cells that are found in both the surface of the skin, and the lining of the digestive tract. In factor 24, ``Colon Transverse" and ``Small Intestine Terminal Ileum" show the strongest effects (and with same sign), reflecting some sharing of effects in these intestinal tissues.
Among the 26 factors, only a few are difficult to interpret biologically (e.g.~ factor 8).

To highlight the benefits of sparsity,
we contrast the \flash{} results with
those for softImpute, which was the best-performing method in the
missing data assessments on these data,
but which uses a nuclear norm penalty that
does not explicitly reward sparse factors or loadings. The first eight softImpute factors are shown in Figure \ref{fig:GTEXSI}. The softImpute results---except for the first two factors---show little resemblance to the flash{} results, and in our view are harder to interpret.



\begin{figure}[hp]
  \centering 
  {\includegraphics[width=14cm]{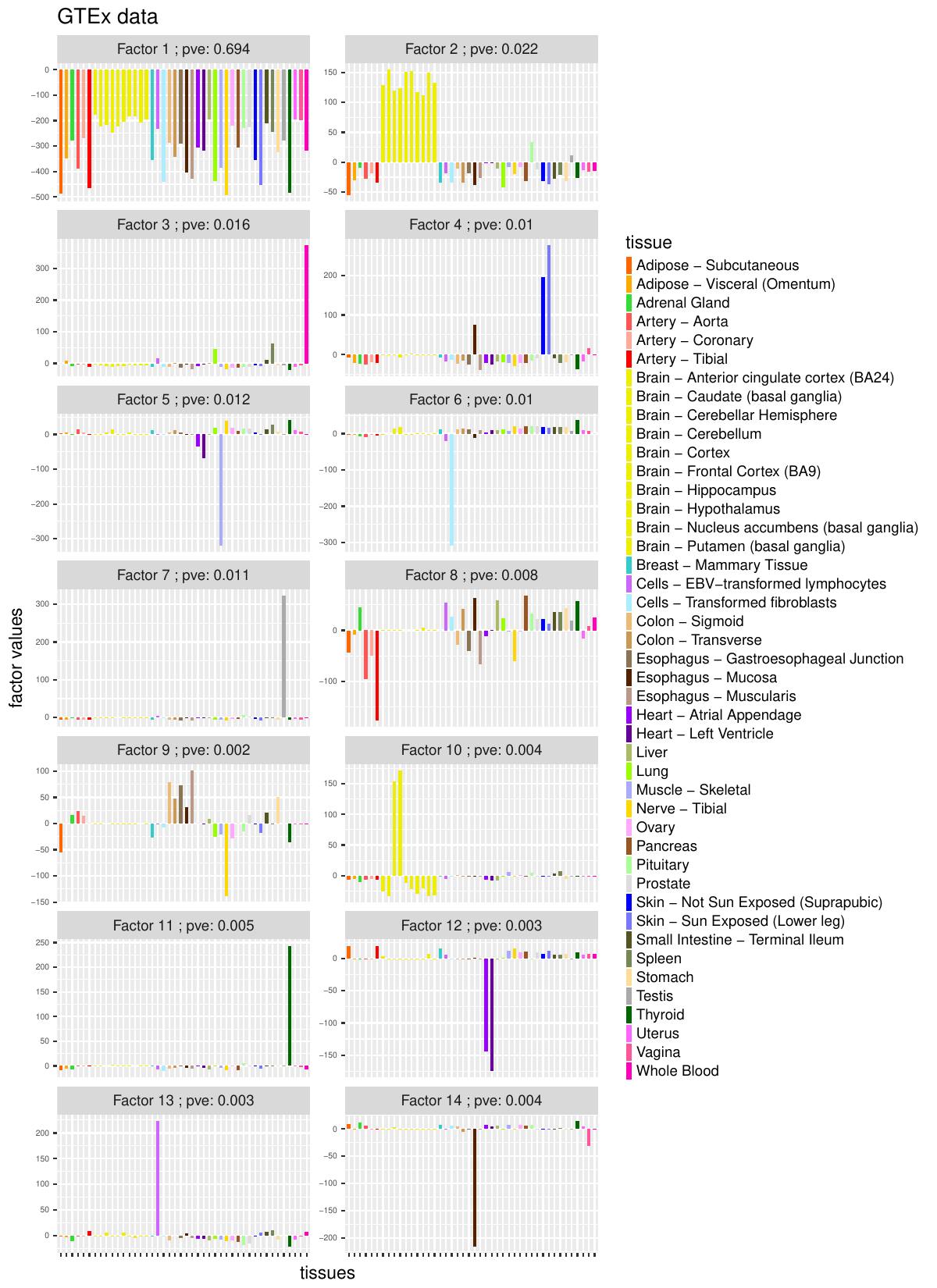}}
   \caption{Results from running \flash{} on GTEx data (factors 1 - 8). The pve ("Percentage Variance Explained") for loading/factor $k$ is defined as $\text{pve}_k := s_k / (\sum_k s_k+ \sum_{ij} 1/\tau_{ij})$ where $s_k:=\sum_{ij} (\ll_{ki} \ff_{kj})^2$. It is
a measure of the amount of signal in the data captured by loading/factor $k$ (but its naming as "percentage variance explained" should be considered loose since the factors are not orthogonal).}
   \label{gtexloading1}
\end{figure}

\begin{figure}[hp]
  \centering 
  {\includegraphics[width=14cm]{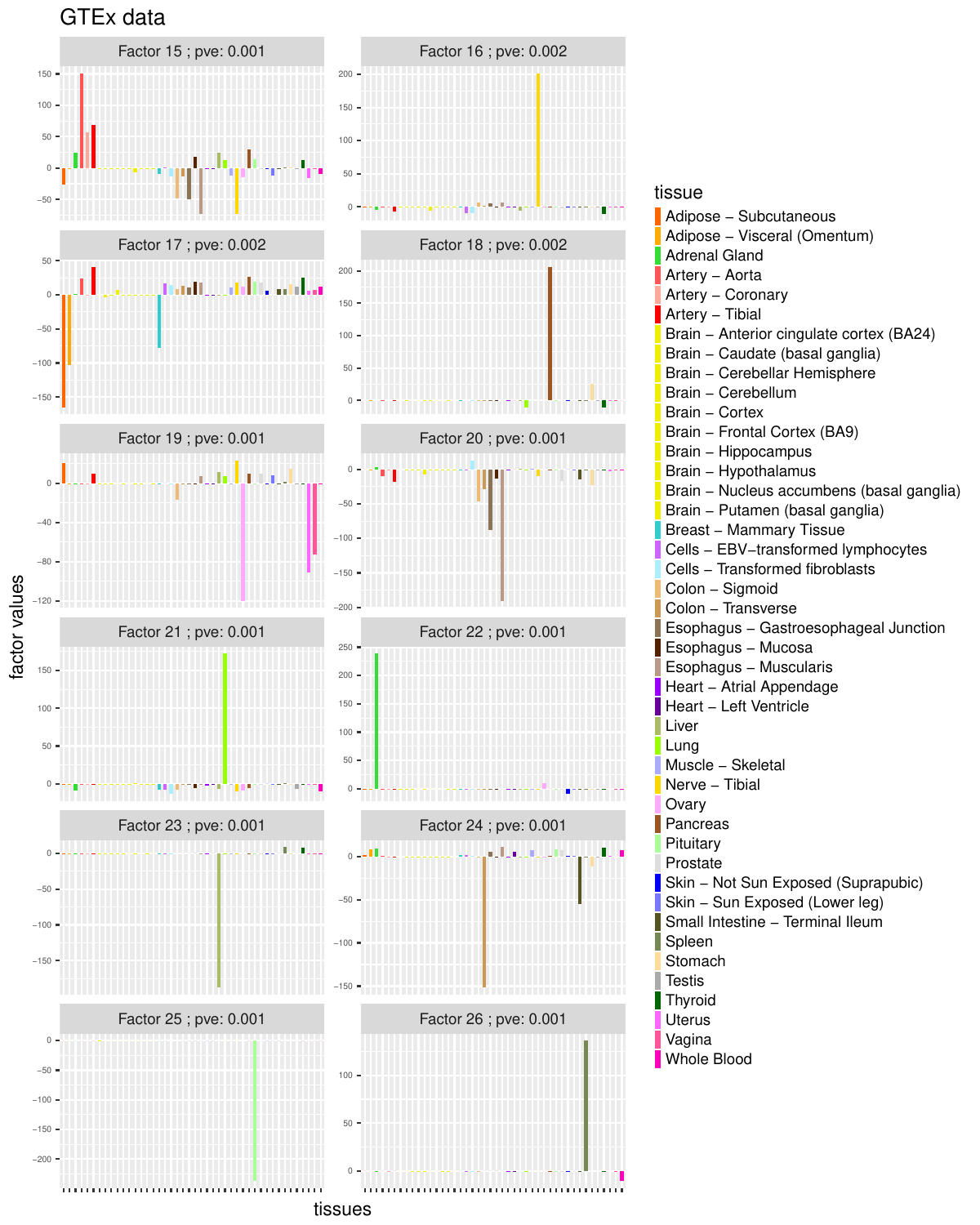}}
   \caption{Results from running \flash{} on GTEx data (factors 15 - 26)}
   \label{gtexloading2}
\end{figure}
 
\begin{figure}[hp]
  \centering 
  {\includegraphics[width=15cm]{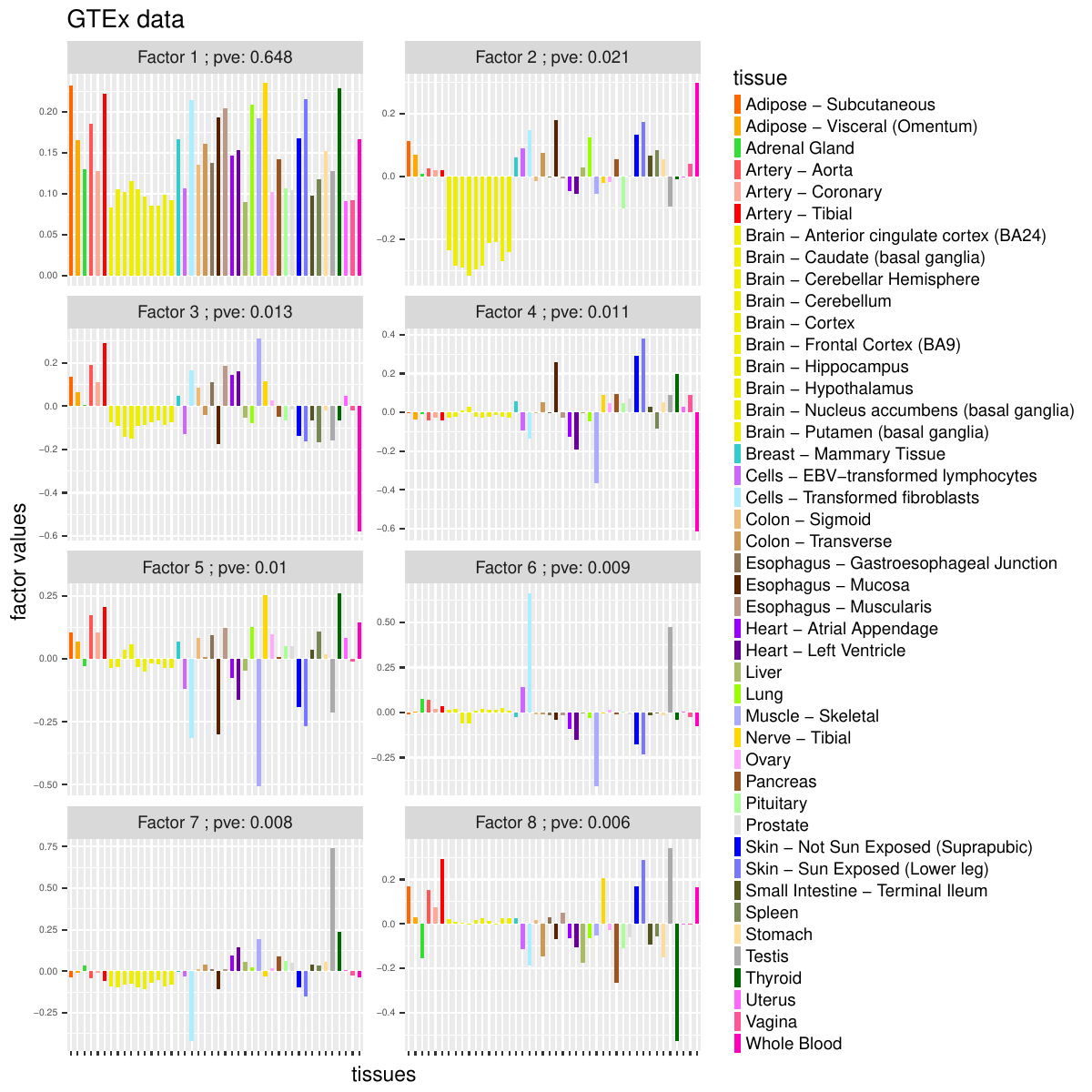}} 
   \caption{Results from running softImpute on GTEx data (factors 1-8). The factors are both less sparse and less interpretable than the \flash{} results.}
   \label{fig:GTEXSI}
\end{figure}

\subsection{Computational Demands}

It is difficult to make general statements
about computational demands of our methods, because both the number of factors and number of iterations per factor can vary considerably depending on the data. However, to give a specific example, running our current implementation of the greedy algorithm on the GTEx data (a 16,000 by 44 matrix) takes about 140s (wall time) 
for $\G=\PN$ and 650s for
$\G=\SN$
(on a 2015 {\sc MacBook Air} with a 2.2 GHz Intel Core i7 processor and 8Gb RAM).
By comparison, a single run of softImpute
without CV takes 2-3s, so a naive
implementation of 5-fold CV with 10 different tuning parameters and 10 different values of $K$ would take over 1000s (although one could
improve on this by use of warm starts for example).

\section{Discussion}

Here we discuss some potential extensions or modifications of our work.

\subsection{Orthogonality Constraint}
Our formulation here does not require the factors or loadings to be orthogonal. In scientific applications
we do not see any particular reason to expect
underlying factors to be orthogonal. However, imposing such a constraint could have computational or mathematical advantages. Formally adding such a constraint to our objective function seems tricky, but
it would be straightforward to modify our algorithms to include an orthogonalization step each update.
This would effectively result in an EB version of the SSVD algorithms in \cite{yang2014sparse}, and it seems
likely to be computationally faster than our current approach. One disadvantage of this approach is that it is unclear what optimization problem such an algorithm would solve (but the same is true of SSVD, 
and our algorithms have the advantage 
that they deal with missing data.)

\subsection{Non-negative Matrix Factorization}

We focused here on the potential for
EBMF to induce sparsity on loadings and factors. However, EBMF can also encode other assumptions.
For example, to assume the loadings and factors are non-negative, simply restrict $\G$ to be a family of non-negative-valued distributions,
yielding ``Empirical Bayes non-negative Matrix Factorization" (EBNMF). 
Indeed, the {\tt ashr} software
can already solve the EBNM problem for some such families $\G$, and so \flash{} already implements EBNMF. In preliminary assessments we found that the greedy approach is problematic here: the non-negative constraint makes it harder for later factors to compensate for errors in earlier factors.  However, it is straightforward to apply the backfitting algorithm to fit EBNMF, with 
initialization by any existing
NMF method. The performance of this approach
is an area for future investigation.

\subsection{Tensor Factorization}

It is also straightforward to extend EBMF
to tensor factorization, specifically a CANDECOMP/PARAFAC
decomposition \citep{kolda2009tensor}:
\begin{align}
Y_{ijm}  &= \sum_{k =1}^K l_{ki} f_{kj} h_{km} + E_{ijm} \\
l_{k1},\dots,l_{kn} &\iid \glk, \quad \glk \in \G \\
f_{k1},\dots,f_{kp} &\iid \gfk, \quad \gfk \in \G \\
h_{k1},\dots,h_{kr} &\iid \ghk, \quad \ghk \in \G \\
E_{ijm} &\iid N(0,1/\tau_{ijm}).
\end{align}
The variational approach is easily extended to this
case \citep[a generalization of methods in ][]{hore2016tensor}, and updates that
increase the objective function can be constructed by solving an EBNM problem, similar to EBMF. 
It seems likely that issues of convergence to local optima, and the need for good initializations, will need some attention to obtain good practical performance. However, results in \cite{hore2016tensor} are promising, and the automatic-tuning feature
of EB methods seems particularly attractive here. For example, extending PMD to this case---allowing for 
different sparsity levels in $l,f$ and $h$---would require 3 penalty parameters even in the rank 1 case,
making it difficult to tune by CV.

\subsection{Non-Gaussian Errors} \label{sec:nongauss}

It is also possible
to extend the variational approximations used
here to fit non-Gaussian models, such as binomial data; see for example
\cite{Jordan2000,seeger2012fast,klami2015polya}. The extension of our EB methods using these ideas is detailed in \cite{wang2017}.


\section*{Acknowledgements}

We thank P.~Carbonetto for computational assistance, and P.~Carbonetto, D.~Gerard, and A.~Sarkar for helpful conversations and comments on a draft manuscript. 
Computing resources were provided by the University of Chicago Research Computing Center. This work was supported by NIH grant HG002585 and by a grant from the Gordon and Betty Moore Foundation (Grant GBMF \#4559). 


\newpage

\appendix

\section{Variational EBMF with \texorpdfstring{$K$}{K} Factors} \label{app:ebfa_k}

Here we describe in detail the variational approach to the $K$ factor model, including deriving updates that
we use to optimize the variational objective. (These
derivations naturally include the $K=1$ model as a special case, and our proof of Proposition \ref{prop:rK} below includes Proposition \ref{prop:EBNM} as a special case.)

Let $\ql,\qf$ denote the variational distributions on the $K$ loadings/factors:
\begin{align}
\ql(\l_1,\cdots,\l_K) & = \prod_k q_{\l_k}(\l_k) \\
\qf(\f_1,\dots,\f_K) & = \prod_k q_{\f_k}(\f_k).
\end{align}
The objective function $F$ \eqref{elbow} is thus a function of $\ql = (q_{\l_1},\dots,q_{\l_K})$, 
$\qf = (q_{\f_1},\dots,q_{\f_K})$, $\gl = (g_{\l_1},\dots,g_{\l_K})$
and $\gf =(g_{\f_1},\dots,g_{\f_K})$, as well as the precision $\prec$:
\begin{align} \label{eqn:elboK}
F(\ql,\qf,\gl,\gf,\prec) &= \int \prod_k q_{\l_k}(\l_k)q_{\f_k}(\f_k)\log \frac{p(Y,\l,\f;g_{\l_1},g_{\f_1},\cdots,g_{\l_K}, g_{\f_K},\prec)}{\prod_k q_{\l_k}(\l_k)q_{\f_k}(\f_k)} \, d\l_k \, d\f_k, \\
&= E_{\ql,\qf}\log p(Y|\l,\f;\prec) + 
\sum_k E_{q_{\l_k}} \log \frac{g_{\l_k}(\l_k)}{q_{\l_k}(\l_k)} + 
\sum_k E_{q_{\f_k}} \log \frac{g_{\f_k}(\f_k)}{q_{\f_k}(\f_k)}.
\end{align}

We optimize $F$ by iteratively updating parameters relating to $\prec$, a single loading $k$ $(q_{\l_k},\glk)$ or factor $k$ $(q_{\f_k},\gfk)$, keeping other parameters fixed. 
We simplify implementation by keeping track of only the first and second moments of the distributions $q_{\l_k}$ and $q_{\f_k}$, which we denote $\ll_k,\llsq_k,\ff_k,\ffsq_k$.
We now describe each kind of update in turn.

\subsection{Updates for Precision Parameters}
\label{sec:est_var}

Here we derive updates to optimize $F$ over the precision parameters $\prec$.
Focusing on the parts of
$F$ that depend on $\prec$ gives:
\begin{align}
F(\prec) & = E_{\ql} E_{\qf} \sum_{ij} 0.5 \log(\tau_{ij}) - 0.5 \tau_{ij} (Y_{ij} - \sum_k l_{ki} f_{kj})^2 + \text{const} \\
&= 0.5 \sum_{ij} \left[ \log(\tau_{ij}) + \tau_{ij} \ERsq_{ij} \right] + \text{const}
\end{align}
where $\ERsq$ is defined by:
\begin{align} \label{eqn:R2}
\ERsq_{ij} &:= E_{\ql,\qf}[(Y_{ij}-\sum_{k=1}^K \l_{ki} \f_{kj})^2] \\ 
&= (Y_{ij} - \sum_k \ll_{ki} \ff_{kj})^2 - \sum_k (\ll_{ki})^2 (\ff_{kj})^2 + \sum_k \llsq_{ki} \ffsq_{kj}.
\end{align}

If we constrain $\prec \in \T$ then we have
\begin{equation}
\hat{\prec} = \arg \max_{\prec \in \T} \sum_{ij} [\log(\tau_{ij}) - \tau_{ij} \ERsq_{ij}].
\end{equation}
For example, assuming constant precision
$\prec_{ij} = \prec$ yields:
\begin{equation}
\hat{\prec} =\frac{NP}{\sum_{ij} \ERsq_{ij}}.
\end{equation}
Assuming column-specific precisions ($\prec_{ij}=\prec_j$), which is the default in our software, yields:
\begin{equation}
\hat{\prec}_j =\frac{N}{\sum_{i} \ERsq_{ij}}.
\end{equation}
Other variance structures are considered in Appendix A.5 of \cite{wang2017}.

\subsection{Updating Loadings and Factors} \label{app:single_factor_updates_rankk}

The following Proposition, which generalizes Proposition \ref{prop:EBNM} in the main text, shows how updates for loadings (and factors) for the $K$-factor EBMF model can be achieve by solving an EBNM problem. 

\begin{proposition}  \label{prop:rK}
For the $K$-factor model, $\arg \max_{q_{\l_k}, \glk} F(\ql,\qf,\gl,\gf,\prec)$ is solved by solving an EBNM problem. Specifically 
\begin{equation}
\arg \max_{q_{\l_k}, \glk} F(\ql,\qf,\gl,\gf,\prec)  =  \ebnm (\hat{\l}(R^k,\ff_k,\ffsq_k,\prec),\bs_l(\ffsq_k,\prec))
\end{equation}
where the functions $\hat{\l}$ and $\bs_l$
are given by \eqref{eqn:lhatfn} and \eqref{eqn:slfn}, $\ff_k, \ffsq_k \in R^p$ denote the vectors whose elements are the first and second moments of $\f_k$ under $q_{\f_k}$, and $R^k$ denotes the residual matrix
\eqref{eqn:resid}.

Similarly, 
$\arg \max_{q_{\f_k}, \gfk}F(\ql,\qf,\gl,\gf,\prec)$
is solved by solving an EBNM problem. Specifically,
\begin{equation}
\arg \max_{q_{\f_k}, \gfk} F(\ql,\qf,\gl,\gf,\prec)  =  \ebnm (\hat{\f}(R^k,\ll_k,\llsq_k,\prec),\bs_f(\llsq_k,\prec))
\end{equation}
where 
the functions $\hat{\f}: R^{n \times p} \times R^n \times R^n \times R^{n \times p} \rightarrow R^p$ and $\bs_f:R^n \times R^{n \times p} \rightarrow R^p$ are given by
\begin{align} \label{eqn:fhatfn}
\hat{\f}(Y,\bm{v},\bm{w},\prec)_j &:=  \frac{\sum_i \precij Y_{ij} v_i}{\sum_i \precij w_i}, \\ \label{eqn:sffn}
\bm{s}_{\f}(\bm{w},\prec)_j &:= \left(\sum_i \precij w_i \right)^{-0.5}.
\end{align}

\end{proposition}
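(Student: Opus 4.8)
The plan is to isolate the terms of the objective function $F(\ql,\qf,\gl,\gf,\prec)$ that depend on the block of parameters being optimized, namely $(q_{\l_k},\glk)$ (the factor case being symmetric, with the roles of rows and columns of $Y$ interchanged), and to recognize the resulting sub-problem as an EBNM problem in the sense of \eqref{eqn:ebnm}. Starting from the expansion of $F$ in \eqref{eqn:elboK}, the only pieces involving $(q_{\l_k},\glk)$ are $E_{\ql,\qf}\log p(Y\mid\l,\f;\prec)$ and the single KL-type term $E_{q_{\l_k}}\log\{g_{\l_k}(\l_k)/q_{\l_k}(\l_k)\}$; the terms for $k'\neq k$ and all the factor terms are constants with respect to the optimization. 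So the first step is to write $F$, up to an additive constant, as a function of $q_{\l_k}$ and $\glk$ alone.

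Next I would expand the Gaussian log-likelihood term. Writing $Y_{ij}-\sum_{k'}l_{k'i}f_{k'j} = R^k_{ij}-l_{ki}f_{kj}$ with $R^k$ the residual matrix \eqref{eqn:resid}, and taking expectations over all the \emph{other} variational factors (which are held fixed), the quadratic $(R^k_{ij}-l_{ki}f_{kj})^2$ produces, after collecting terms in $l_{ki}$, an expression of the form $\sum_i\big[A_i l_{ki}^2 - 2B_i l_{ki}\big]$ plus terms not involving $l_{ki}$, where $A_i = \tfrac12\sum_j\tau_{ij}\ffsq_{kj}$ and $B_i = \tfrac12\sum_j\tau_{ij}R^k_{ij}\ff_{kj}$ (using that the fixed variational distributions are being integrated against, so $f_{kj}$ and $f_{kj}^2$ are replaced by their first and second moments $\ff_{kj},\ffsq_{kj}$, and the residual entries $R^k_{ij}$ involve the fixed first moments of the other loadings/factors). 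Completing the square in $l_{ki}$ then shows that this quadratic, up to an $l_{ki}$-free constant, equals $-\tfrac{1}{2s_i^2}(l_{ki}-\hat l_i)^2$ with $s_i^{-2}=\sum_j\tau_{ij}\ffsq_{kj}$ and $\hat l_i = (\sum_j\tau_{ij}R^k_{ij}\ff_{kj})/(\sum_j\tau_{ij}\ffsq_{kj})$ — exactly $\bm s_\l(\ffsq_k,\prec)_i$ of \eqref{eqn:slfn} and $\hat{\l}(R^k,\ff_k,\ffsq_k,\prec)_i$ of \eqref{eqn:lhatfn}.

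The third step is to identify the resulting optimization. Having shown $F = \text{const} + \sum_i\big[-\tfrac{1}{2s_i^2}(l_{ki}-\hat l_i)^2\big] + E_{q_{\l_k}}\log\{g_{\l_k}(\l_k)/q_{\l_k}(\l_k)\}$, I would observe that this is precisely (minus) the variational objective / evidence lower bound for the normal means model \eqref{eqn:normalmeans}--\eqref{eqn:theta} with data $\bx = \hat{\l}(R^k,\ff_k,\ffsq_k,\prec)$, standard deviations $\bs = \bm s_\l(\ffsq_k,\prec)$, and prior family $\G$. Since in the normal means model the fully-factorized variational family contains the exact posterior (the likelihood factorizes over coordinates and so does the prior, hence so does the posterior), maximizing this lower bound over $q_{\l_k}$ and over $\glk\in\G$ returns exactly the EBNM solution: $\glk$ is the marginal-likelihood maximizer \eqref{eqn:ghat} and $q_{\l_k}$ is the corresponding posterior \eqref{eqn:post}. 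This is the content of $\arg\max_{q_{\l_k},\glk}F = \ebnm(\hat{\l}(R^k,\ff_k,\ffsq_k,\prec),\bs_l(\ffsq_k,\prec))$. The factor statement follows by the identical argument with $i\leftrightarrow j$, $n\leftrightarrow p$, and $\l_k\leftrightarrow\f_k$, giving the formulas \eqref{eqn:fhatfn}--\eqref{eqn:sffn}; and taking $K=1$ recovers Proposition \ref{prop:EBNM}.

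The main obstacle is the bookkeeping in the second step: one must carefully take the expectation over the product of all fixed variational factors, keep track of which cross-terms vanish (they do because the variational distribution factorizes, so $E[l_{k'i}f_{k'j}l_{k''i}f_{k''j}]$ splits, and the terms linear in the fixed factors combine into the residual $R^k$), and verify that the coefficient of $l_{ki}^2$ only ever sees $\ffsq_{kj}$ and never $(\ff_{kj})^2$ — this is exactly why second moments, not squared first moments, appear in $\bm s_\l$. Once the quadratic-in-$l_{ki}$ form with the right coefficients is in hand, completing the square and invoking the EBNM characterization are routine.
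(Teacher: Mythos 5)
Your proposal is correct and follows essentially the same route as the paper: isolate the terms of $F$ involving $(q_{\l_k},\glk)$, reduce the expected Gaussian log-likelihood to a quadratic in $l_{ki}$ with coefficients $\sum_j\tau_{ij}\ffsq_{kj}$ and $\sum_j\tau_{ij}R^k_{ij}\ff_{kj}$, and then recognize the resulting objective as the variational (ELBO) form of the normal means problem, whose maximizer over unrestricted $q$ and $g\in\G$ is exactly the EBNM solution --- this last step is precisely the paper's Lemma~\ref{Lemma_1}, which you re-derive inline rather than cite. (Your intermediate definitions of $A_i,B_i$ carry a stray factor of $\tfrac12$ and sign, but the completed-square quantities you land on match \eqref{eqn:lhatfn}--\eqref{eqn:slfn}, so this is only notational.)
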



\subsubsection{A Lemma on the Normal Means Problem}

To prove Proposition \ref{prop:rK} we introduce
a lemma that characterizes the solution of the
normal means problem in terms of an objective that
is closely related to the variational objective.

Recall that the EBNM model is:
\begin{align} \label{eqn:normalmeans2}
\bx &= \btheta + \be \\
\theta_1,\dots,\theta_n & \iid g, \quad g \in \G.\label{eqn:normalmeans2_2}
\end{align}
where $e_i \sim N(0,s_i^2)$. 
 
Solving the EBNM problem involves estimating $g$ by maximum likelihood:
\begin{equation} \label{eqn:ghat2}
\hat{g} = \arg \max_{g \in \G} l(g),
\end{equation} 
where
\begin{equation}
l(g) = \log p(\bx | g).
\end{equation}
It also involves finding the posterior distributions:
\begin{equation} \label{eqn:post2}
p(\btheta | \bx, \hat{g}) = \prod_j p(\theta_j | \bx, \hat{g})  \propto \prod_j \hat{g}(\theta_j) p(x_j | \theta_j, s_j).
\end{equation}

\def\post{p_{\theta | x, g}}
\def\qt{q_\theta}
\def\FNM{F^{\text{NM}}}

\begin{lemma}
\label{Lemma_1}
Solving the EBNM problem also solves:
\begin{equation} \label{eqn:EBNM-Fversion}
\max_{q_\theta, g\in\G} \FNM(\qt, g)
\end{equation}
where 
\begin{equation} \label{eqn:EBNMF}
\FNM(\qt,g) = E_{\qt}\left[-\frac{1}{2} \sum_j (A_j \theta_j^2 -2 B_j \theta_j)\right] + E_{\qt}\log \frac{g(\btheta)}{\qt(\btheta)} + \text{const}
\end{equation} 
with $A_j  = 1/s_j^2$ and $B_j = x_j/s_j^2$, and $g(\btheta):=\prod_j g(\theta_j)$.

Equivalently, \eqref{eqn:EBNM-Fversion}-\eqref{eqn:EBNMF} is solved by 
$g=\hat{g}$ in \eqref{eqn:ghat2} and $\qt = p(\btheta | \bx, \hat{g})$ in \eqref{eqn:post2},
with $x_j = B_j/A_j$ and $s^2_j = 1/A_j$.
\end{lemma}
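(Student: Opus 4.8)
\emph{Proof idea.} The plan is to recognize $\FNM$ in \eqref{eqn:EBNMF} as the standard variational lower bound (ELBO) for the EBNM model \eqref{eqn:normalmeans2}--\eqref{eqn:normalmeans2_2}, up to an additive term that does not depend on the optimization variables $\qt$ and $g$. Once this is in hand, the characterization of the optimum follows from the same variational identity already used in \eqref{log-likelihood}, now specialized to this model.

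First I would rewrite the Gaussian log-likelihood by completing the square. Expanding $\log p(x_j\mid\theta_j,s_j)=-\tfrac12\log(2\pi s_j^2)-\tfrac{1}{2s_j^2}(x_j-\theta_j)^2$ and collecting the terms quadratic and linear in $\theta_j$, one checks that with $A_j=1/s_j^2$ and $B_j=x_j/s_j^2$,
\[
\log p(\bx\mid\btheta)=-\tfrac12\sum_j\bigl(A_j\theta_j^2-2B_j\theta_j\bigr)+c_0,
\]
where $c_0=-\sum_j\bigl(\tfrac12\log(2\pi s_j^2)+\tfrac{x_j^2}{2s_j^2}\bigr)$ involves only the fixed data $(\bx,\bs)$, not $\btheta$, $\qt$, or $g$. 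Taking $E_{\qt}$ of both sides and adding $E_{\qt}\log\frac{g(\btheta)}{\qt(\btheta)}$ then gives
\[
\FNM(\qt,g)=E_{\qt}\log\frac{p(\bx,\btheta\mid g)}{\qt(\btheta)}+\text{const},
\]
so $\FNM$ and the ELBO of the EBNM model agree up to a constant and therefore share the same optimizers. Reading the same computation backwards also settles the last sentence of the lemma: inverting $A_j=1/s_j^2$, $B_j=x_j/s_j^2$ gives $s_j^2=1/A_j$ and $x_j=B_j/A_j$, so the EBNM problem ``attached'' to $\FNM$ is precisely the one with these $(x_j,s_j)$.

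Next I would invoke the variational identity: for any $\qt$ and any $g\in\G$,
\[
l(g)=E_{\qt}\log\frac{p(\bx,\btheta\mid g)}{\qt(\btheta)}+D_{KL}\bigl(\qt\,\big\|\,p(\btheta\mid\bx,g)\bigr),
\]
which is \eqref{log-likelihood} with $Y$ replaced by $\bx$ and no precision term. Since the KL divergence is nonnegative, and zero precisely when $\qt=p(\btheta\mid\bx,g)$ (up to a null set), the supremum over $\qt$ of the ELBO equals $l(g)$ and is attained uniquely at the posterior; optimizing the result over $g$ is then maximum-likelihood estimation of $g$, solved by $\hat g$ of \eqref{eqn:ghat2}. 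Combining with the previous step, the optimum of $\FNM(\qt,g)$ over $(\qt,g)$ is attained at $g=\hat g$ and $\qt=p(\btheta\mid\bx,\hat g)$ as in \eqref{eqn:post2}, which is exactly what the EBNM procedure returns. (If $\hat g$ is not unique, any maximum-likelihood $g$ paired with its posterior is an optimizer.)

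I do not expect a genuine obstacle here: the argument is the ELBO/KL bookkeeping already carried out for the EBMF objective, now specialized to the EBNM model. The only points that need care are (i) confirming that everything swept into the constant depends only on the data $(\bx,\bs)$ and not on $(\qt,g)$ --- in particular the $x_j^2/(2s_j^2)$ terms produced by completing the square --- and (ii) using the strict direction of the KL inequality to get uniqueness of the optimizing $\qt$. As elsewhere in the paper, the statement tacitly assumes the maximizer $\hat g$ in \eqref{eqn:ghat2} is attained within $\G$.
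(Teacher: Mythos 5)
Your proposal is correct and follows essentially the same route as the paper's own proof: complete the square in the Gaussian likelihood to identify $\FNM$ with the ELBO of the EBNM model up to an additive constant, then apply the standard $l(g)=\FNM+D_{KL}$ identity so that the optimum over $\qt$ is the posterior and the optimum over $g$ is the marginal maximum-likelihood estimate $\hat g$. The only difference is order of presentation (you complete the square first, the paper does it last), plus your minor extra remarks on uniqueness and attainment, which do not change the argument.
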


\begin{proof}
The log likelihood can be written as
\begin{align}
l(g)  &:= \log [p(\bx| g)] \\
& = \log[ p(\bx, \btheta | g) / p(\btheta | \bx,g)] \\
& = \int \qt(\btheta)  \log \frac{p(\bx, \btheta | g)}{p(\btheta | \bx,g)} d\btheta \\
& = \int \qt(\btheta)  \log \frac{p(\bx, \btheta | g)}{\qt(\btheta)} d\btheta + \int \qt(\btheta) \log \frac{\qt(\btheta)}{p(\btheta | \bx,g)} d\btheta \\
& = \FNM(\qt,g) + D_{KL}(\qt || \post) \label{eqn:lik-F-D}
\end{align}
where
\begin{equation}
\label{def_elbow}
\FNM(\qt,g) = \int \qt(\btheta) \log \frac{p(\bx, \btheta | g)}{\qt(\btheta)} \, d\btheta
\end{equation}
and 
\begin{equation}
\label{def_KL}
D_{KL}(\qt || \post) = -\int  \qt(\btheta) \log \frac{p(\btheta | \bx, g) }{\qt(\btheta)} \, d\btheta
\end{equation}
Here $\post$ denotes the posterior distribution $p(\btheta | \bx, g)$.
This identity holds for any distribution $\qt(\btheta)$.

Rearranging \eqref{eqn:lik-F-D} gives:
\begin{equation}
\FNM(\qt,g) = l(g) - D_{KL}(\qt || \post).
\end{equation}
Since $D_{KL}(\qt || \post) \geq 0$, with equality when $\qt=\post$, $\FNM(\qt,g)$ is maximized over $\qt$ by setting $\qt=\post$.
Further 
\begin{equation}
\label{F_l_link_1}
\max_{\qt} \FNM(\qt,g) = l(g),
\end{equation}
so
\begin{equation}
\arg \max_{g \in \G} \max_{\qt} \FNM(\qt,g) = \arg \max_{g \in \G} l(g) = \hat{g}.
\end{equation} 
It remains only to show that $\FNM$ has the form \eqref{eqn:EBNMF}.

By \eqref{eqn:normalmeans2} and \eqref{eqn:normalmeans2_2}, we have
\begin{equation}
\log p(\bx, \btheta | g) = -\frac{1}{2} \sum_j s_j^{-2} (x_j - \theta_j)^2 + \log g(\btheta) + \text{const}.
\end{equation}
Thus
\begin{equation}
\label{proof_Fnm}
 \FNM(\qt,g) = E_{\qt}\left[-\frac{1}{2} \sum_j (A_j \theta_j^2 -2 B_j \theta_j)\right] + \E_{\qt}\log \frac{g(\btheta)}{\qt(\btheta)}+ \text{const}.
\end{equation}
\end{proof}

\subsubsection{Proof of Proposition \ref{prop:rK}}

We are now ready to prove Proposition \ref{prop:rK}.
\begin{proof}
We prove the first part of the proposition since the proof for the second part is essentially the same.

The objective function \eqref{eqn:elboK} is:
\begin{align} 
F(\ql,\qf,\gl,\gf,\prec) &= E_{\ql,\qf}\log p(Y|\l,\f;\prec) + 
\sum_k E_{q_{\l_k}} \log \frac{g_{\l_k}(\l_k)}{q_{\l_k}(\l_k)} + 
\sum_k E_{q_{\f_k}} \log \frac{g_{\f_k}(\f_k)}{q_{\f_k}(\f_k)} \\
& = E_{q_{\l_k}}\left[-\frac{1}{2} \sum_i (A_{ik} l_{ki}^2 -2 B_{ik} l_{ki})\right]
 + E_{q_{\l_k}} \log \frac{\glk(\l_k)}{q_{\l_k}(\l_k)} + C_1 \label{marginalF_K}
\end{align}
where $C_1$ is a constant with respect to $q_{\l_k}, \glk$ and
\begin{align}
A_{ik} &=  \sum_j \tau_{ij} \Efksq\\
B_{ik} &=  \sum_j \tau_{ij}\left(R^k_{ij}  \Efk \right).
\end{align} 

Based on Lemma~\ref{Lemma_1}, we can solve this optimization problem \eqref{marginalF_K} by solving the EBNM problem with:
\begin{align}
x_i & = \frac{\sum_j \tau_{ij}\left(R^k_{ij}  \Efk \right)}{\sum_j \tau_{ij} \Efksq} \\
s^2_i &= \frac{1}{\sum_j \tau_{ij} \Efksq}.
\end{align}
\end{proof}

\subsection{Algorithms}

Just as with the rank $1$ EBMF model,
the updates for the rank $K$ model 
require only the first and second
moments of the variational distributions $q$. 
Thus we implement the updates in
algorithms that keep track of
the first moments ($\ll:=(\ll_1,\dots,\ll_K)$ and $\ff:=(\ff_1,\dots,\ff_K)$) and second moments
($\llsq:=(\llsq_1,\dots,\llsq_K)$ and $\ffsq:=(\ffsq_1,\dots,\ffsq_K)$), and the precision $\prec$.

Algorithm \ref{alg:update} implements a basic update for $\prec$, and for the parameters relating to a single factor $k$ ($\ll_k,\llsq_k,\ff_k,\ffsq_k$).
Note that the latter updates are identical to the updates for fitting the single factor EBMF model, but with $Y_{ij}$ replaced with the residuals obtained by removing the estimated effects of the other $k-1$ factors.

\begin{algorithm}[H]
\caption{Single-factor update for EBMF (rank K)} \label{alg:update}
\begin{algorithmic}[1]
\Require A data matrix $Y$ ($n \times p$)
\Require A function, ${\tt ebnm}(\bx,\bs) \rightarrow (\widebar{\btheta},\widebar{\btheta^2})$, that solves the EBNM problem \eqref{eqn:normalmeans}-\eqref{eqn:theta} and returns the first and second posterior moments \eqref{eqn:postmean}-\eqref{eqn:postmean2}.
\Require Current values for first moments $\ll:=(\ll_1,\dots,\ll_K)$ and $\ff:=(\ff_1,\dots,\ff_K)$.
\Require Current values for second moments $\llsq:=(\llsq_1,\dots,\llsq_K)$ and $\ffsq:=(\ffsq_1,\dots,\ffsq_K)$.
\Require An index $k$ indicating which loading/factor to compute updated values for.
\State Compute matrix of expected squared residuals, $\ERsq$, using \eqref{eqn:R2}
\State $\tau_j \gets n/\sum_i \ERsq_{ij}$. [Assumes column-specific variances; can be modified to make other assumptions.]
\State Compute residual matrix $R^k:= Y-\sum_{k' \neq k} \ll_{k'} \ff^T_{k'}$.
	\State Compute $\hat{\l}(R^k,\ff_k,\ffsq_k,\prec)$ and its standard error $\bs_l(\prec,\ffsq_k)$, using \eqref{eqn:lhatfn} and \eqref{eqn:slfn}.
	\State $(\ll_k,\llsq_k) \gets {\tt ebnm}(\hat{\l},\bs_l)$.   
	\State Compute $\hat{\f}(R^k,\ll_k,\llsq_k,\prec)$ and its standard error $\bs_f(\llsq,\prec_k)$. 
	\State $(\ff_k,\ffsq_k) \gets {\tt ebnm}(\hat{\f},\bs_f)$. \\
\Return updated values $\ll_k,\llsq_k,\ff_k,\ffsq_k,\prec$.
\end{algorithmic}
\end{algorithm}

Based on these basic updates we implemented two algorithms for fitting the $K$-factor EBMF model: the greedy algorithm,
and the backfitting algorithm, as follows.

\subsubsection{Greedy Algorithm} \label{app:greedy}


The greedy algorithm is a forward procedure that, at the $k$th step, adds new factors and loadings $\l_k$, $\f_k$ by optimizing over $q_{\l_k},q_{\f_k},\glk,\gfk$ while keeping 
the distributions related to previous factors fixed.
Essentially this involves fitting
the single-factor model to the residuals obtained
by removing previous factors.
The procedure stops adding
factors when the estimated new factors (or loadings) are identically zero. The algorithm as follows:

\begin{algorithm}[H]
\caption{Greedy Algorithm for EBMF} \label{alg:greedy}
\begin{algorithmic}[1]
\Require A data matrix $Y$ ($n \times p$)
\Require A function, ${\tt ebnm}(\bx,\bs) \rightarrow (\widebar{\btheta},\widebar{\btheta^2})$, that solves the EBNM problem \eqref{eqn:normalmeans}-\eqref{eqn:theta} and returns the first and second posterior moments \eqref{eqn:postmean}-\eqref{eqn:postmean2}.
\Require A function, ${\tt init}(Y) \rightarrow (\l;\f)$ that
provides initial estimates for the loadings and factors (see Section \ref{sec:init}).
\Require A function ${\tt single\_update}(Y,\ll,\ff,\llsq,\ffsq,k) \rightarrow(\ll_k,\llsq_k,\ff_k,\ffsq_k,\prec)$ implementing Algorithm \ref{alg:update}.
\State initialize $K \gets 0$.
\Repeat
\State $K \gets K+1$.
\State Compute residual matrix $R_{ij} = Y_{ij} - \sum_{k=1}^{K-1} \ll_{ki} \ff_{kj}$.
\State Initialize first moments $(\ll_K,\ff_K) \gets {\tt init}(R)$.
\State Initialize second moments by squaring first moments: 
$\llsq_K \gets \ll_K^2; \ffsq_K \gets \ff_K^2$. 
\Repeat 
	\State $(\ll_K,\llsq_K,\ff_K,\ffsq_K,\prec) \gets
    {\tt single\_update}(Y,\ll,\llsq,\ff,\ffsq,K)$
\Until converged
\Until $\ff_K$ is identically 0 or $\ll_K$ is identically 0. \\
\Return $\ll,\llsq,\ff,\ffsq,\prec$
\end{algorithmic}
\end{algorithm}

\subsubsection{Backfitting Algorithm}

The backfitting algorithm iteratively refines a fit of $K$ factors and loadings, by updating them one at a time, at each update keeping the other loadings and factors fixed. The name comes from its connection with
the backfitting algorithm in \citet{Breiman1985},
specifically the fact that it involves iteratively
re-fitting to residuals.


\begin{algorithm}[H]
\caption{Backfitting algorithm for EBMF (rank $K$)} \label{alg:backfit}
\begin{algorithmic}[1]
\Require A data matrix $Y$ ($n \times p$)
\Require A function, ${\tt ebnm}(\bx,\bs) \rightarrow (\widebar{\btheta},\widebar{\btheta^2})$, that solves the EBNM problem \eqref{eqn:normalmeans}-\eqref{eqn:theta} and returns the first and second posterior moments \eqref{eqn:postmean}-\eqref{eqn:postmean2}.
\Require A function, ${\tt init}(Y) \rightarrow (\l_1,\dots,\l_K;\f_1,\dots,\f_K)$ that
provides initial estimates for the loadings and factors (e.g.~the greedy algorithm from Appendix \ref{app:greedy},
or a rank $K$ SVD).
\Require A function ${\tt single\_update}(Y,\ll,\ff,\llsq,\ffsq,k) \rightarrow(\ll_k,\llsq_k,\ff_k,\ffsq_k,\prec)$ implementing Algorithm \ref{alg:update}.
\State Initialize first moments $(\ll_1,\dots,\ll_K;\ff_1,\dots,\ff_K) \gets {\tt init}(Y)$.
\State Initialize second moments by squaring first moments: 
$\llsq_k \gets \ll_k^2; \ffsq_k \gets \ff_k^2$. 
[alternatively the {\tt init} function could provide these initial values].
\Repeat 
\For{$k=1,\dots,K$}
	\State $(\ll_k,\llsq_k,\ff_k,\ffsq_k,\prec) \gets
    {\tt single\_update}(Y,\ll,\llsq,\ff,\ffsq,k)$
\EndFor
\Until converged\\
\Return $\ll,\llsq,\ff,\ffsq,\prec$
\end{algorithmic}
\end{algorithm}

\subsection{Objective Function Computation}

The algorithms above all involve
updates that will increase (or, at least, not decrease) the objective function $F(\ql,\qf,\gl,\gf,\prec)$. However, 
these updates do not
require computing the objective function itself. In iterative algorithms
it can be helpful to compute
the objective function to monitor convergence (and as
a check on implementation). In this subsection we
describe how this can be done. In essence, this involves extending the solver of the 
EBNM problem to also
return the value of the $\log$-likelihood achieved in that problem (which is usually not difficult).

The objective function of the EBMF model is:
\begin{equation}
F(\ql,\qf,\gl,\gf,\prec) = E_{\ql,\qf}\log p(Y|\l,\f;\prec) + 
E_{\ql} \log \frac{\gl(\l)}{\ql(\l)} + E_{\qf} \log \frac{\gf(\f)}{\qf(\f)}
\end{equation}
The calculation of $E_{\ql,\qf}\log p(Y|\l,\f;\prec)$ is straightforward and $E_{\ql} \log \frac{\gl(\l)}{\ql(\l)}$ and $E_{\ql} \log \frac{\gl(\l)}{\ql(\l)}$ can be calculated using the log-likelihood of the EBNM model using the following Lemma \ref{lemma:obj}.


\begin{lemma} \label{lemma:obj}
Suppose $\hat{g},q$ solves the EBNM problem with data $(\bx,\bs)$: 
\begin{equation}
(\hat{g},q) = \ebnm (\bx,\bs),
\end{equation}
where $q:=(q_1,\dots,q_n)$ are the estimated posterior distributions of the normal means parameters $\theta_1,\dots,\theta_n$. Then 
\begin{equation}
E_q (\log(\prod_j \hat{g}(\theta_j)/\prod_j q_j(\theta_j)))
= l(\hat{g}; \bx,\bs) + \frac{1}{2} \sum_j \log (2\pi s_j^2) + (1/s_j^2) (x_j^2 + E_q (\theta_j^2) - 2x_j E_q(\theta_j) 
\end{equation}
where $l(\hat{g}; \bx,\bs)$ is the log of the likelihood for the normal means problem \eqref{eqn:lik-F-D}.
\end{lemma}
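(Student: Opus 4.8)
The plan is to start from the definition of the EBNM log-likelihood and decompose it using the standard variational identity already established in Lemma \ref{Lemma_1}. Recall from equation \eqref{eqn:lik-F-D} that $l(g;\bx,\bs) = \FNM(\qt,g) + D_{KL}(\qt \| \post)$ for any $\qt$, and that when $(\hat g, q) = \ebnm(\bx,\bs)$ we have $q = \post$ with $\hat g$ plugged in, so $D_{KL}(q \| \post) = 0$ and hence $l(\hat g; \bx,\bs) = \FNM(q,\hat g)$. So the first step is simply to write
\begin{equation}
\FNM(q,\hat g) = E_q \log \frac{p(\bx,\btheta \mid \hat g)}{q(\btheta)} = E_q \log \frac{\hat g(\btheta)}{q(\btheta)} + E_q \log p(\bx \mid \btheta, \bs),
\end{equation}
using that $p(\bx,\btheta\mid \hat g) = \hat g(\btheta) p(\bx \mid \btheta, \bs)$ with $\hat g(\btheta) = \prod_j \hat g(\theta_j)$.

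The second step is to rearrange this to isolate the quantity of interest:
\begin{equation}
E_q \log \frac{\prod_j \hat g(\theta_j)}{\prod_j q_j(\theta_j)} = l(\hat g; \bx,\bs) - E_q \log p(\bx \mid \btheta, \bs).
\end{equation}
The third step is to compute $E_q \log p(\bx \mid \btheta, \bs)$ explicitly. Since the errors are independent Gaussian with variances $s_j^2$,
\begin{equation}
\log p(\bx \mid \btheta, \bs) = -\frac{1}{2}\sum_j \left[ \log(2\pi s_j^2) + \frac{1}{s_j^2}(x_j - \theta_j)^2 \right],
\end{equation}
and taking expectations under $q$ (which factorizes over $j$) and expanding $(x_j-\theta_j)^2 = x_j^2 - 2 x_j \theta_j + \theta_j^2$ gives
\begin{equation}
E_q \log p(\bx \mid \btheta, \bs) = -\frac{1}{2}\sum_j \left[ \log(2\pi s_j^2) + \frac{1}{s_j^2}\left(x_j^2 - 2 x_j E_q(\theta_j) + E_q(\theta_j^2)\right) \right].
\end{equation}
Substituting this back into the rearranged identity yields exactly the claimed formula (up to the sign/collection of terms as written).

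This argument is essentially bookkeeping, so I do not expect a genuine obstacle; the only point requiring a little care is making sure the reference normal means log-likelihood $l(\hat g;\bx,\bs)$ in the statement is the one with the Gaussian normalizing constants included (as in \eqref{eqn:lik}), rather than a version with constants dropped — the displayed formula is consistent only if the $\frac{1}{2}\sum_j \log(2\pi s_j^2)$ terms are accounted for on the same side, which is why they appear explicitly in the Lemma. I would also double-check the orientation of the KL term: we need $D_{KL}(q\|\post)=0$, which holds because $q$ is defined to be the exact posterior $p(\btheta\mid\bx,\hat g)$ from \eqref{eqn:post2}, so the inequality in \eqref{eqn:lik-F-D} is tight and the decomposition is an exact identity, not just a bound. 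With that noted, the proof is a one-line citation of Lemma \ref{Lemma_1} followed by the elementary Gaussian expectation computation above.
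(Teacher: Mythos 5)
Your proposal is correct and follows essentially the same route as the paper's own proof: both rely on the identity $\FNM(q,\hat{g}) = l(\hat{g};\bx,\bs)$ (tightness of the bound from Lemma \ref{Lemma_1} because $q$ is the exact posterior), split $p(\bx,\btheta\mid\hat{g})$ into $\hat{g}(\btheta)\,p(\bx\mid\btheta,\bs)$, and evaluate the Gaussian expectation term by expanding $(x_j-\theta_j)^2$. Your closing remarks about keeping the normalizing constants and the tightness of the KL term are sound, and correctly flag that the displayed formula in the lemma should be read with the $\tfrac{1}{2}$ distributed over both the $\log(2\pi s_j^2)$ and the quadratic terms inside the sum.
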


\begin{proof}
We have from \eqref{def_elbow}
\begin{align}
\FNM(\qt,\hat{g}) & = \int \qt(\btheta) \log \frac{p(\bx, \btheta | \hat{g})}{\qt(\btheta)} \, d\btheta \\
 & = \int \qt(\btheta) \log \frac{p(\bx | \btheta)\hat{g}(\btheta)}{\qt(\btheta)} \, d\btheta \\
 & = E_q(\log(\prod_j \hat{g}(\theta_j)/\prod_j q_j(\theta_j))) 
 - \frac{1}{2} E_q[\sum_j \log(2\pi s_j^2) + (1/s_j^2) (x_j-\theta_j)^2 ]
\end{align}
And the result follows from noting that $\FNM(\hat{q},\hat{g})= l(\hat{g})$.
\end{proof}



\subsection{Inference with Penalty Term} \label{sec:penalty}

Conceivably, in some settings one might like
to encourage solutions to the EBMF problem be sparser than the maximum-likelihood estimates for $\gl,\gf$ would produce. This could be done by extending the EBMF model to introduce a penalty term on the distributions $\gl,\gf$
so that the maximum likelihood estimates are replaced by maximizing a penalized likelihood.  We
are not advocating for this approach, but it is straightforward given existing machinery, and so we
document it here for completeness.

Let $h_{\l}(g_{\l})$ and $h_{\f}(g_{\f})$ denote penalty terms on $g_{\l}$ and $g_{\f}$, so the penalized log-likelihood would be:
\begin{eqnarray}
l(\gl, \gf, \prec)  &:=& \log [p(Y|\gl,\gf,\prec^2)] + h_{\l}(\gl) + h_{\f}(\gf)\\
& =& F(q,\gl,\gf,\prec^2) +  h_{\l}(\gl) + h_{\f}(\gf) + D_{KL}(q||p) \nonumber
\end{eqnarray}
where $F(q,\gl,\gf,\prec^2)$ and $ D_{KL}(q||p)$ are defined in~\eqref{elbow} and~\eqref{KLdistance}. And the corresponding penalized variational objective is: 
\begin{eqnarray}
\max F(q,\gl,\gf,\prec^2) +  h_{\l}(\gl) + h_{\f}(\gf).
\end{eqnarray}

It is straightforward to modify the algorithms above
to maximize this penalized objective: simply modify
the EBNM solvers to solve a corresponding penalized normal means problem. That is, instead of estimating the prior $g$
by maximum likelihood, the EBNM solver must now
maximize the penalized log-likelihood:
\begin{equation} \label{eqn:ghat2_reg}
\hat{g} = \arg \max_{g \in \G} l_\text{EBNM}(g) + h(g),
\end{equation} 
where $l_\text{EBNM}$ denote the log-likelihood for the EBNM problem. (The computation of the posterior distributions
given $\hat{g}$ is unchanged).

For example, the {\tt ashr} software 
\citep{stephens:2017} provides the option
to include a penalty on $g$ to encourage
overestimation of the size of the point mass on zero.  This penalty was introduced to ensure conservative behavior in False Discovery Rate applications of the normal means problem. It is unclear that such a penalty is desirable
in the matrix factorization application. However,
the above discussion shows
that using this penalty (e.g.~ within the {\tt ebnm} function used by the greedy or backfitting algorithms) can be thought of as solving a penalized version of the EBMF problem.

\section{Orthogonal Cross Validation} 
\label{OCValgorithm}

Cross-validation assessments involving ``holding out" (hiding) data from methods.
Here we introduce a novel approach to selecting the data to be held out,
which we call Orthogonal Cross Validation (OCV). Although not the main focus of our paper, we believe that OCV is a novel and appealing approach to selecting
hold-out data for factor models, e.g.~when using
CV to select an appropriate dimension $K$ for dimension reduction
methods, as in \cite{owen2016bi}.

Generic $k$-fold CV involves randomly dividing the data matrix into $k$ parts
and then, for each part, training methods on the other $k-1$ parts 
before assessing error on that part, as in Algorithm~\ref{alg:CV}.
\begin{algorithm}[H]
\caption{k-fold CV}\label{alg:CV}
\begin{algorithmic}[1]
\Procedure{k-fold cross validation}{}
\State randomly divide data matrix $Y$ into $Y_{(1)},\cdots,Y_{(k)}$ with ``hold-out" index $\Omega_{(1)},\cdots,\Omega_{(k)}$
\For {$i = 1, \cdots, k$}
\State take $Y_{(i)}$ as missing and run \flash{}
\State $ \hat{Y}_{(i)} = E [Y_{\Omega_{(i)}}| Y_{-\Omega_{(i)}}]$  
\State $s^2_i = ||\hat{Y}_{(i)}-Y_{\Omega_{(i)}}||_2^2$ 
\EndFor
\Return RMSE: $score = \sqrt{\frac{\sum_k s^2_k}{NP}}$ 
\EndProcedure
\end{algorithmic}
\end{algorithm}

The novel part of OCV is in how to choose the ``hold-out" pattern. 
We randomly divide the columns and rows into $k$ sets. and put these sets into $k$ orthogonal parts, and then take all $Y_{ij}$ with the chosen column and row indices as ``hold-out" $Y_{(i)}$. 

To illustrate this scheme, we take 3-fold CV as an example. We randomly divide the columns into 3 sets and the rows into 3 sets as well. The data matrix $Y$ is divided into 9 partition (by row and column permutation):

$$Y = \begin{pmatrix} Y_{11} & Y_{12} & Y_{13} \\ Y_{21} & Y_{22} & Y_{23} \\ Y_{31} & Y_{32} & Y_{33} \end{pmatrix}$$

Then $Y_{(1)} = \{  Y_{11},  Y_{22} , Y_{33}\}$, $Y_{(2)} = \{  Y_{12},  Y_{23} , Y_{31}\}$ and $Y_{(3)} = \{  Y_{13},  Y_{21} , Y_{32}\}$ are orthogonal to each other. Then the data matrix $Y$ is marked as:
$$Y = \begin{pmatrix} Y_{(1)} & Y_{(2)} & Y_{(3)} \\ Y_{(3)} & Y_{(1)} & Y_{(2)} \\ Y_{(2)} & Y_{(3)} & Y_{(1)} \end{pmatrix}$$

In OCV, each fold $k$, $Y_{(k)}$ contains equally balanced part of data matrix and includes all the row and column indices. This ensures that all $i$'s and $j$'s are included into each $Y_{-(k)}$.  In 3-fold OCV, we have:
\begin{align}
Y&= \begin{pmatrix} Y_{11} & Y_{12} & Y_{13}\\ Y_{21} & Y_{22}  & Y_{23} \\  Y_{31} & Y_{32}  & Y_{33}\end{pmatrix} = \left[ \begin{array}{c} L^{(1)} \\ L^{(2)} \\ L^{(3)}\end{array} \right] \times \left[ \begin{array}{ccc} F^{(1)} & F^{(2)} & F^{(3)} \end{array} \right] + E\\ 
&= \begin{pmatrix} Y_{(1)} & Y_{(2)} & Y_{(3)} \\ Y_{(3)} & Y_{(1)} & Y_{(2)} \\ Y_{(2)} & Y_{(3)} & Y_{(1)} \end{pmatrix} =  \begin{bmatrix} L^{(1)}F^{(1)} & L^{(1)}F^{(2)} & L^{(1)}F^{(3)} \\  L^{(2)}F^{(1)} & L^{(2)}F^{(2)} & L^{(2)}F^{(3)} \\  L^{(3)}F^{(1)} & L^{(3)}F^{(2)} & L^{(3)}F^{(3)}\end{bmatrix} + E 
\end{align}
where $Y_{(1)} = \{  Y_{11},  Y_{22} , Y_{33}\}$, $Y_{(2)} = \{  Y_{12},  Y_{23} , Y_{31}\}$ and $Y_{(3)} = \{  Y_{13},  Y_{21} , Y_{32}\}$. We can see for each ``hold-out" part, $Y_{(k)}$, $L^{(1)},L^{(2)},L^{(3)}$ and $F^{(1)},F^{(2)},F^{(3)}$ show up once and only once. 
In this sense the hold-out pattern is ``balanced".

\vskip 0.2in
\bibliography{FLASH}

\end{document}